\providecommand{\U}[1]{\protect\rule{.1in}{.1in}}
\newtheorem{theorem}{Theorem}
\newtheorem{corollary}[theorem]{Corollary}
\newtheorem{lemma}[theorem]{Lemma}
\newtheorem{proposition}[theorem]{Proposition}
\newtheorem{remark}[theorem]{Remark}
\newenvironment{proof}[1][Proof]{\noindent\textbf{#1.} }{\ \rule{0.5em}{0.5em}}
\begin{document}

\title{\textbf{Amortization does not enhance the max-Rains information of a quantum channel}}
\author{Mario Berta\thanks{Department of Computing, Imperial College London, London SW7 2AZ, UK, and 
Institute for Quantum Information and Matter, California Institute of
Technology, Pasadena, California 91125, USA} \and Mark M.~Wilde\thanks{Hearne Institute for Theoretical Physics, Department of Physics and Astronomy,
Center for Computation and Technology, Louisiana State University, Baton
Rouge, Louisiana 70803, USA}}
\date{\today}

\maketitle

\begin{abstract}
Given an entanglement measure $E$, the entanglement of a quantum channel is
defined as the largest amount of entanglement $E$ that can be generated from
the channel, if the sender and receiver are not allowed to share a quantum
state before using the channel. The amortized entanglement of a quantum
channel is defined as the largest net amount of entanglement $E$\ that can be
generated from the channel, if the sender and receiver are allowed to share an
arbitrary state before using the channel. Our main technical result is that
amortization does not enhance the entanglement of an arbitrary quantum
channel, when entanglement is quantified by the max-Rains relative entropy. We
prove this statement by employing semi-definite programming (SDP)\ duality and
SDP\ formulations for the max-Rains relative entropy and a channel's max-Rains
information, found recently in [Wang \textit{et al}., arXiv:1709.00200]. The
main application of our result is a single-letter, strong-converse, and
efficiently computable upper bound on the capacity of a quantum channel for
transmitting qubits when assisted by positive-partial-transpose preserving
(PPT-P) channels between every use of the channel. As the class of local
operations and classical communication (LOCC) is contained in PPT-P, our
result establishes a benchmark for the LOCC-assisted quantum capacity of an
arbitrary quantum channel, which is relevant in the context of distributed
quantum computation and quantum key distribution.

\end{abstract}

\section{Introduction}

One of the main goals of quantum information theory is to understand the
fundamental limitations on communication when a sender and receiver are
connected by a quantum communication channel \cite{H13book,H06,W15book}. Since
it might be difficult to transmit information reliably by making use of a
channel just once, a practically relevant setting is when the sender and
receiver use the channel multiple times, with the goal being to maximize the
rate of communication subject to a constraint on the error probability. The
capacity of a quantum channel is defined to be the maximum rate of reliable
communication, such that the error probability tends to zero in the limit when
the channel is utilized an arbitrary number of times.

Among the various capacities of a quantum channel $\mathcal{N}$, the
LOCC-assisted quantum capacity $Q^{\leftrightarrow}(\mathcal{N})$%
\ \cite{BDSW96}\ is particularly relevant for tasks such as distributed
quantum computation. In the setting corresponding to this capacity, the sender
and receiver are allowed to perform arbitrary LOCC\ (local operations and
classical communication) between every use of the channel, and the capacity is
equal to the maximum rate, measured in qubits per channel use, at which qubits
can be transmitted reliably from the sender to the receiver \cite{BDSW96}. Due
to the teleportation protocol \cite{PhysRevLett.70.1895}, this rate is equal
to the maximum rate at which shared entangled bits (Bell pairs) can be
generated reliably between the sender and the receiver \cite{BDSW96}. The
LOCC-assisted quantum capacity of certain channels such as the quantum erasure
channel has been known for some time \cite{PhysRevLett.78.3217}, but in
general, it remains an open question to characterize $Q^{\leftrightarrow
}(\mathcal{N})$. One can address this question by establishing either lower
bounds\ or upper bounds on $Q^{\leftrightarrow}(\mathcal{N})$.

In this paper, we are interested in placing upper bounds on the LOCC-assisted
quantum capacity, and one way of simplifying the mathematics behind this task
is to relax the class of free operations that the sender and receiver are
allowed to perform between each channel use. With this in mind, we follow the
approach of \cite{R99,R01}\ and relax the set LOCC\ to a larger class of
operations known as PPT-preserving (PPT-P), standing for channels that are
positive partial transpose preserving. The resulting capacity is then known as
the PPT-P-assisted quantum capacity $Q^{\operatorname{PPT-P},\leftrightarrow
}(\mathcal{N})$, and it is equal to the maximum rate at which qubits can be
communicated reliably from a sender to a receiver, when they are allowed to
use a PPT-preserving channel in between every use of the actual channel
$\mathcal{N}$. Figure~\ref{fig:private-code} provides a visualization of such
a PPT-P-assisted quantum communication protocol. Due to the containment LOCC
$\subset$ PPT-P \cite{R99,R01}, the inequality%
\begin{equation}
Q^{\leftrightarrow}(\mathcal{N})\leq Q^{\operatorname{PPT-P},\leftrightarrow
}(\mathcal{N}) \label{eq:LOCC-PPT-q-cap}%
\end{equation}
holds for all channels $\mathcal{N}$. Thus, if we find an upper bound on
$Q^{\operatorname{PPT-P},\leftrightarrow}(\mathcal{N})$, then by
\eqref{eq:LOCC-PPT-q-cap}, such an upper bound also bounds the physically
relevant LOCC-assisted quantum capacity $Q^{\leftrightarrow}(\mathcal{N})$.

A general approach for bounding these assisted capacities of a quantum channel
has been developed recently in \cite{KW17}\ (see
\cite{BHLS03,LHL03,Christandl2017,BGMW17,RKBKMA17} for related notions). The
starting point is to consider an entanglement measure $E(A;B)_{\rho}$
\cite{H42007}, which is evaluated for a bipartite state $\rho_{AB}$. Given
such an entanglement measure, one can define the entanglement $E(\mathcal{N})$
of a channel $\mathcal{N}$ in terms of it by taking an optimization over all
pure, bipartite states that could be input to the channel:%
\begin{equation}
E(\mathcal{N})=\sup_{\psi_{RA}}E(R;B)_{\omega}, \label{eq:channel-entang}%
\end{equation}
where $\omega_{RB}=\mathcal{N}_{A\rightarrow B}(\psi_{RA})$. The channel's
entanglement $E(\mathcal{N})$ characterizes the amount of entanglement that a
sender and receiver can generate by using the channel if they do not share
entanglement prior to its use. Due to the properties of an entanglement
measure and the well known Schmidt decomposition theorem, it suffices to take
system $R$ isomorphic to the channel input system $A$ and furthermore to
optimize over pure states $\psi_{RA}$.

\begin{figure}[ptb]
\begin{center}
\includegraphics[
width=6.5638in
]{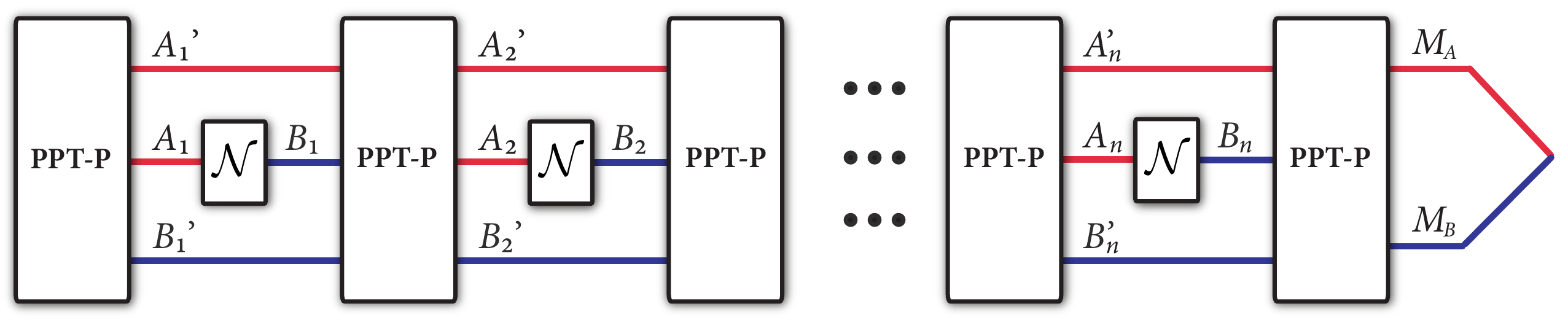}
\end{center}
\caption{A protocol for PPT-P-assisted quantum communication that uses a
quantum channel $n$ times. Every channel use is interleaved by a
PPT-preserving channel. The goal of such a protocol is to produce an
approximate maximally entangled state in the systems $M_{A}$ and $M_{B}$,
where Alice possesses system $M_{A}$ and Bob system $M_{B}$.}%
\label{fig:private-code}%
\end{figure}

One can alternatively consider the amortized entanglement $E_{A}(\mathcal{N}%
)$\ of a channel $\mathcal{N}$ as the following optimization \cite{KW17}:%
\begin{equation}
E_{A}(\mathcal{N})=\sup_{\rho_{A^{\prime}AB^{\prime}}}\left[  E(A^{\prime
};BB^{\prime})_{\tau}-E(A^{\prime}A;B^{\prime})_{\rho}\right]
,\label{eq:channel-entang-amort}%
\end{equation}
where $\tau_{A^{\prime}BB^{\prime}}=\mathcal{N}_{A\rightarrow B}%
(\rho_{A^{\prime}AB^{\prime}})$ and $\rho_{A^{\prime}AB^{\prime}}$ is a state.
The supremum is with respect to all states $\rho_{A^{\prime}AB^{\prime}}$ and
the systems $A^{\prime}B^{\prime}$ are finite-dimensional but could be
arbitrarily large (so that the supremum might never be achieved for any
particular finite-dimensional $A^{\prime}B^{\prime}$, but only in the limit of
unbounded dimension). Thus, $E_{A}(\mathcal{N})$\ is not known to be
computable in general. The amortized entanglement quantifies the net amount of
entanglement that can be generated by using the channel $\mathcal{N}$, if the
sender and receiver are allowed to begin with some initial entanglement in the
form of the state $\rho_{A^{\prime}AB^{\prime}}$. That is, $E(A^{\prime
}A;B^{\prime})_{\rho}$ quantifies the entanglement of the initial state
$\rho_{A^{\prime}AB^{\prime}}$, and $E(A^{\prime};BB^{\prime})_{\tau}$
quantifies the final entanglement of the state after the channel acts. As
observed in \cite{KW17}, the inequality%
\begin{equation}
E(\mathcal{N})\leq E_{A}(\mathcal{N})\ \label{eq:amortized-to-usual-ineq}%
\end{equation}
always holds for any entanglement measure $E$ and for any channel
$\mathcal{N}$, simply because one could take the $B^{\prime}$ system trivial
in the optimization for $E_{A}(\mathcal{N})$, which is the same as not
allowing entanglement between the sender and receiver before the channel acts.
It is nontrivial if the opposite inequality%
\begin{equation}
E_{A}(\mathcal{N})\overset{?}{\leq}E(\mathcal{N})\ \label{eq:possible-ineq}%
\end{equation}
holds, which is known to occur generally for certain entanglement
measures\ \cite{TGW14IEEE,Christandl2017,KW17}\ or for certain channels with
particular symmetries \cite{KW17}.

One of the main observations of \cite{KW17}, connected to earlier developments
in \cite{BHLS03,LHL03,Christandl2017,BGMW17,RKBKMA17}, is that the amortized
entanglement of a channel serves as an upper bound on the entanglement of the
final state $\omega_{AB}$ generated by an LOCC- or PPT-P-assisted quantum
communication protocol that uses the channel $n$ times:%
\begin{equation}
E(A;B)_{\omega}\leq nE_{A}(\mathcal{N}). \label{eq:ent-bound}%
\end{equation}
The basic intuition for why this bound holds is that, after a given channel
use, the sender and receiver are allowed to perform a free operation such as
LOCC\ or PPT, and thus the state that they share before the next channel use
could have some entanglement. So the amount of entanglement generated by each
channel use cannot exceed the amortized entanglement $E_{A}(\mathcal{N})$, and
if the channel is used $n$ times in such a protocol, then the entanglement of
the final state $\omega_{AB}$ cannot exceed the channel's amortized
entanglement multiplied by the number $n$ of channel uses. Such a general
bound can then be used to derive particular upper bounds on the assisted
quantum capacities, such as strong converse bounds. Clearly, if the inequality
in \eqref{eq:possible-ineq} holds, then $E_{A}(\mathcal{N})=E(\mathcal{N})$
and the upper bound becomes much simpler because the channel entanglement
$E(\mathcal{N})$ is simpler than the amortized entanglement $E_{A}%
(\mathcal{N})$. Thus, one of the main contributions of \cite{KW17} was to
reduce the physical question of determining meaningful upper bounds on the
assisted capacities of $\mathcal{N}$ to a purely mathematical question of
whether amortization can enhance the entanglement of a channel, i.e., whether
the equality%
\begin{equation}
E_{A}(\mathcal{N})\overset{?}{=}E(\mathcal{N})
\end{equation}
holds for a given entanglement measure $E$\ and/or channel$~\mathcal{N}$.
Furthermore, it was shown in \cite{KW17} how to incorporate the previous results of \cite{BDSW96,Mul12,P17} into the amortization framework of \cite{KW17}.

In this paper, we solve the mathematical question posed above for the
max-Rains information $R_{\max}(\mathcal{N})$ of a quantum channel
$\mathcal{N}$, by proving that amortization does not enhance it; i.e., we
prove that%
\begin{equation}
R_{\max,A}(\mathcal{N})=R_{\max}(\mathcal{N}), \label{eq:max-rains-equality}%
\end{equation}
for all channels $\mathcal{N}$, where $R_{\max,A}(\mathcal{N})$ denotes the
amortized max-Rains information. Note that $R_{\max}(\mathcal{N})$ and
$R_{\max,A}(\mathcal{N})$\ are respectively defined by taking the entanglement
measure $E$ in \eqref{eq:channel-entang}\ and
\eqref{eq:channel-entang-amort}\ to be the max-Rains relative entropy, which
we define formally in the next section. We note here that the equality in
\eqref{eq:max-rains-equality} solves an open question posed in the conclusion
of \cite{Christandl2017}, and we set our result in the context of the prior
result of \cite{Christandl2017} and other literature in
Section~\ref{sec:discussion}. The max-Rains information of a quantum channel
is a special case of a quantity known as the sandwiched R\'enyi-Rains
information \cite{TWW14}\ and was recently shown to be equal to an information
quantity discussed in \cite{WD16,WFD17} and based on semi-definite
programming. To prove our main technical result (the equality in
\eqref{eq:max-rains-equality}), we critically make use of the tools and
framework developed in the recent works \cite{WD16,WD16pra,WFD17}. In
particular, we employ semi-definite programming duality \cite{BV04} and the
well known Choi isomorphism to establish our main result, with the proof
consisting of just a few lines once the framework from
\cite{WD16,WD16pra,WFD17}\ is set in place.

The main application of the equality in \eqref{eq:max-rains-equality}\ is an
efficiently computable, single-letter, strong converse bound on
$Q^{\operatorname{PPT-P},\leftrightarrow}(\mathcal{N})$, the PPT-P-assisted
quantum capacity of an arbitrary channel $\mathcal{N}$. Due to
\eqref{eq:LOCC-PPT-q-cap}, this is also an upper bound on the physically
relevant LOCC-assisted quantum capacity $Q^{\leftrightarrow}(\mathcal{N})$. To
arrive at this result, we simply apply the general inequality in
\eqref{eq:ent-bound}\ along with the equality in
\eqref{eq:max-rains-equality}. For the benefit of the reader, we give
technical details of this application in Section~\ref{sec:strong-converse-bnd}%
. The quantity $R_{\max}(\mathcal{N})$ has already been shown in
\cite{WFD17}\ to be efficiently computable via a semi-definite program, and in
Section~\ref{sec:strong-converse-bnd}, we explain how $R_{\max}(\mathcal{N})$
is both \textquotedblleft single-letter\textquotedblright\ and
\textquotedblleft strong converse.\textquotedblright

The usefulness of the upper bound given in our paper is ultimately related with the importance of PPT-preserving channels. This is because the set of PPT-preserving channels contains the set of separable channels, and the set of separable channels strictly contains the set of LOCC channels, as shown  in \cite{PhysRevA.59.1070} and then in \cite{KANI13} for a classical scenario. Moreover, there is an entanglement monotone that can be increased by separable channels \cite{PhysRevLett.103.110502}. Thus, in general, PPT-preserving channels can increase entanglement, although this increase is not detectable by the max-Rains information. Thus, in this sense, the max-Rains information might be considered a rough measure for bounding LOCC-assisted quantum capacity. Therefore, as stressed earlier, the usefulness of our bound on the PPT-P assisted quantum capacity is directly related to PPT-preserving channels.

Our paper is organized as follows. In the next section, we review some
background material before starting with the main development.
Section~\ref{sec:technical-result} gives a short proof of our main technical
result, and Section~\ref{sec:strong-converse-bnd} discusses its application as
an efficiently computable, single-letter, strong converse bound on
$Q^{\operatorname{PPT},\leftrightarrow}(\mathcal{N})$. In
Section~\ref{sec:alt-proof-emax}, we revisit a result from
\cite{Christandl2017}, in which it was shown that amortization does not
enhance a channel's max-relative entropy of entanglement. The authors of
\cite{Christandl2017} proved this statement by employing complex interpolation
theory \cite{BL76}. We prove the main inequality underlying this statement
using a method different from that used in \cite{Christandl2017}, but along
the lines of that given for our proof of \eqref{eq:max-rains-equality} (i.e.,
convex programming duality), and we suspect that our alternative approach
could be useful in future applications. In Section~\ref{sec:discussion}, we
discuss how our result fits into the prior literature on assisted quantum
capacities and strong converses. We conclude with a brief summary in
Section~\ref{sec:conclusion}.

\section{Background and notation}

In this section, we provide background on the Choi isomorphism, partial
transpose, positive partial transpose (PPT)\ states, separable states,
PPT-preserving channels, max-relative entropy, max-Rains relative entropy, and
max-Rains information. For basic concepts and standard notation used in
quantum information theory, we point the reader to \cite{W15book}.

The Choi isomorphism represents a well known duality between channels and
states, often employed in quantum information theory. Let $\mathcal{N}%
_{A\rightarrow B}$ be a quantum channel, and let $|\Upsilon\rangle_{RA}$
denote the maximally entangled vector%
\begin{equation}
|\Upsilon\rangle_{RA}=\sum_{i}|i\rangle_{R}|i\rangle_{A},
\label{eq:max-ent-vec}%
\end{equation}
where the Hilbert spaces $\mathcal{H}_{R}$ and $\mathcal{H}_{A}$ are of the
same dimension and $\{|i\rangle_{R}\}_{i}$ and $\{|i\rangle_{A}\}_{i}$ are
fixed orthonormal bases. The Choi operator for a channel $\mathcal{N}%
_{A\rightarrow B}$ is defined as%
\begin{equation}
J_{RB}^{\mathcal{N}}=(\operatorname{id}_{R}\otimes\mathcal{N}_{A\rightarrow
B})(|\Upsilon\rangle\langle\Upsilon|_{RA}), \label{eq:Choi-op}%
\end{equation}
where $\operatorname{id}_{R}$ denotes the identity map on system $R$. One can
recover the action of the channel $\mathcal{N}_{A\rightarrow B}$ on an
arbitrary input state $\rho_{SA^{\prime}}$ as follows:%
\begin{equation}
\langle\Upsilon|_{A^{\prime}R}\ \rho_{SA^{\prime}}\otimes J_{RB}^{\mathcal{N}%
}\ |\Upsilon\rangle_{A^{\prime}R}=\mathcal{N}_{A\rightarrow B}(\rho_{SA}),
\label{eq:Choi-iso}%
\end{equation}
where $A^{\prime}$ is a system isomorphic to the channel input $A$. The above
identity can be understood in terms of a postselected variant \cite{HM04,B05}
of the quantum teleportation protocol \cite{PhysRevLett.70.1895}. Another
identity we recall is that%
\begin{equation}
\langle\Upsilon|_{RA}\left(  X_{SR}\otimes I_{A}\right)  |\Upsilon\rangle
_{RA}=\operatorname{Tr}_{R}\{X_{SR}\}, \label{eq:max-ent-trace}%
\end{equation}
for an operator $X_{SR}$ acting on $\mathcal{H}_{S}\otimes\mathcal{H}_{R}$.

For a fixed basis $\{|i\rangle_{B}\}_{i}$, the partial transpose is the
following map:%
\begin{equation}
(\operatorname{id}_{A}\otimes T_{B})(X_{AB}) =\sum_{i,j}\left(  I_{A}%
\otimes|i\rangle\langle j|_{B}\right)  X_{AB}\left(  I_{A}\otimes
|i\rangle\langle j|_{B}\right)  ,
\end{equation}
where $X_{AB}$ is an arbitrary operator acting on a tensor-product Hilbert
space $\mathcal{H}_{A}\otimes\mathcal{H}_{B}$. For simplicity we often employ
the abbreviation $T_{B}(X_{AB})=(\operatorname{id}_{A}\otimes T_{B})(X_{AB})$.
The partial transpose map plays a role in the following well known transpose
trick identity:%
\begin{equation}
\left(  X_{SR}\otimes I_{A}\right)  |\Upsilon\rangle_{RA}=\left(  T_{A}%
(X_{SA})\otimes I_{R}\right)  |\Upsilon\rangle_{RA}. \label{eq:tr-trick}%
\end{equation}
The partial transpose map plays another important role in quantum information
theory because a separable (unentangled) state%
\begin{equation}
\sigma_{AB}=\sum_{x}p(x)\tau_{A}^{x}\otimes\omega_{B}^{x}\in\operatorname{SEP}%
(A\!:\!B),
\end{equation}
for a distribution $p(x)$ and states $\tau_{A}^{x}$ and $\omega_{B}^{x}$,
stays within the set of separable states under this map
\cite{Horodecki19961,P96}:%
\begin{equation}
T_{B}(\sigma_{AB})\in\operatorname{SEP}(A\!:\!B).
\end{equation}
This motivates defining the set of PPT\ states, which are those states
$\sigma_{AB}$ for which $T_{B}(\sigma_{AB})\geq0$. This in turn motivates
defining the more general set of positive semi-definite operators
\cite{AdMVW02}:%
\begin{equation}
\operatorname{PPT}^{\prime}(A\!:\!B)=\left\{  \sigma_{AB}:\sigma_{AB}%
\geq0\wedge\left\Vert T_{B}(\sigma_{AB})\right\Vert _{1}\leq1\right\}  ,
\end{equation}
where we have employed the trace norm, defined for an operator $X$ as
$\left\Vert X\right\Vert _{1}=\operatorname{Tr}\{\left\vert X\right\vert \}$
with $\left\vert X\right\vert =\sqrt{X^{\dag}X}$. We then have the
containments $\operatorname{SEP}\subset\operatorname{PPT}\subset
\operatorname{PPT}^{\prime}$.

An LOCC\ quantum channel $\mathcal{N}_{AB\rightarrow A^{\prime}B^{\prime}}$
consists of an arbitrarily large but finite number of compositions of the following:

\begin{enumerate}
\item Alice performs a quantum instrument, which has both a quantum and
classical output. She forwards the classical output to Bob, who then performs
a quantum channel conditioned on the classical data received. This sequence of
actions corresponds to a channel of the following form:%
\begin{equation}
\sum_{x}\mathcal{F}_{A\rightarrow A^{\prime}}^{x}\otimes\mathcal{G}%
_{B\rightarrow B^{\prime}}^{x}, \label{eq-em:LOCC-channel}%
\end{equation}
where $\{\mathcal{F}_{A\rightarrow A^{\prime}}^{x}\}_{x}$ is a collection of
completely positive maps such that $\sum_{x}\mathcal{F}_{A\rightarrow
A^{\prime}}^{x}$ is a quantum channel and $\{\mathcal{G}_{B\rightarrow
B^{\prime}}^{x}\}_{x}$ is a collection of quantum channels.

\item The situation is reversed, with Bob performing the initial instrument,
who forwards the classical data to Alice, who then performs a quantum channel
conditioned on the classical data.\ This sequence of actions corresponds to a
channel of the form in \eqref{eq-em:LOCC-channel}, with the $A$ and $B$ labels switched.
\end{enumerate}

\noindent A quantum channel $\mathcal{N}_{AB\rightarrow A^{\prime}B^{\prime}}$
is a PPT-preserving channel if the map $T_{B^{\prime}}\circ\mathcal{N}%
_{AB\rightarrow A^{\prime}B^{\prime}}\circ T_{B}$ is a quantum channel
\cite{R99,R01}. Any LOCC channel is a PPT-preserving channel \cite{R99,R01}.

The max-relative entropy of a state $\rho$ relative to a positive
semi-definite operator $\sigma$ is defined as \cite{D09}%
\begin{equation}
D_{\max}(\rho\Vert\sigma)=\inf\{\lambda:\rho\leq2^{\lambda}\sigma\}.
\end{equation}
If $\operatorname{supp}(\rho)\not \subseteq \operatorname{supp}(\sigma)$, then
$D_{\max}(\rho\Vert\sigma)=\infty$. The max-relative entropy is monotone
non-increasing under the action of a quantum channel $\mathcal{N}$ \cite{D09},
in the sense that%
\begin{equation}
D_{\max}(\rho\Vert\sigma)\geq D_{\max}(\mathcal{N}(\rho)\Vert\mathcal{N}%
(\sigma)).
\end{equation}
The above inequality is also called the data-processing inequality for
max-relative entropy.

The max-Rains relative entropy of a state $\rho_{AB}$ is defined as%
\begin{equation}
R_{\max}(A;B)_{\rho}=\min_{\sigma_{AB}\in\text{PPT}^{\prime}(A:B)}D_{\max
}(\rho_{AB}\Vert\sigma_{AB}),
\end{equation}
and it is monotone non-increasing under the action of a PPT-preserving quantum
channel $\mathcal{N}_{AB\rightarrow A^{\prime}B^{\prime}}$ \cite{TWW14}, in
the sense that%
\begin{equation}
R_{\max}(A;B)_{\rho}\geq R_{\max}(A^{\prime};B^{\prime})_{\omega},
\end{equation}
for $\omega_{A^{\prime}B^{\prime}}=\mathcal{N}_{AB\rightarrow A^{\prime
}B^{\prime}}(\rho_{AB})$. The max-Rains information of a quantum channel
$\mathcal{N}_{A\rightarrow B}$\ is defined by replacing $E$ in
\eqref{eq:channel-entang}\ with the max-Rains relative entropy $R_{\max}$;
i.e.,
\begin{equation}
R_{\max}(\mathcal{N})=\max_{\phi_{SA}}R_{\max}(S;B)_{\omega},
\end{equation}
where $\omega_{SB}=\mathcal{N}_{A\rightarrow B}(\phi_{SA})$ and $\phi_{SA}$ is
a pure state, with $\left\vert S\right\vert =\left\vert A\right\vert $. The
amortized max-Rains information of a channel, denoted as $R_{\max
,A}(\mathcal{N})$, is defined by replacing $E$ in
\eqref{eq:channel-entang-amort} with the max-Rains relative entropy $R_{\max}$.

Recently, in \cite[Eq.~(8)]{WD16pra} (see also \cite[Eq.~(36)]{WFD17}), the
max-Rains relative entropy of a state $\rho_{AB}$ was expressed as%
\begin{equation}
R_{\max}(A;B)_{\rho}=\log_{2}W(A;B)_{\rho},\label{eq:alt-max-Rains-state}%
\end{equation}
where $W(A;B)_{\rho}$ is the solution to the following semi-definite program:%
\begin{align}
\text{minimize } &  \ \operatorname{Tr}\{C_{AB}+D_{AB}\}\nonumber\\
\text{subject to } &  \ C_{AB},\ D_{AB}\geq0,\nonumber\\
&  \ T_{B}(C_{AB}-D_{AB})\geq\rho_{AB}.\label{eq:sdp-W}%
\end{align}
Similarly, in \cite[Eq.~(21)]{WFD17}, the max-Rains information of a quantum
channel $\mathcal{N}_{A\rightarrow B}$ was expressed as%
\begin{equation}
R_{\max}(\mathcal{N})=\log\Gamma(\mathcal{N}),\label{eq:alt-max-Rains-channel}%
\end{equation}
where $\Gamma(\mathcal{N})$ is the solution to the following semi-definite
program:%
\begin{align}
\text{minimize } &  \ \left\Vert \operatorname{Tr}_{B}\{V_{SB}+Y_{SB}%
\}\right\Vert _{\infty}\nonumber\\
\text{subject to\ } &  \ Y_{SB},\ V_{SB}\geq0,\nonumber\\
&  \ T_{B}(V_{SB}-Y_{SB})\geq J_{SB}^{\mathcal{N}}.\label{eq:Gamma-sdp}%
\end{align}
These formulations of $R_{\max}(A;B)_{\rho}$ and $R_{\max}(\mathcal{N})$\ are
the tools that we use to prove our main technical result,
Proposition~\ref{prop:amort-doesnt-help}. It is worthwhile to mention that the
formulations above follow by employing the theory of semi-definite programming
and its duality.

\section{Main technical result}

\label{sec:technical-result}The following proposition constitutes our main
technical result, and an immediate corollary of it is that amortization does
not enhance the max-Rains information of a quantum channel:

\begin{proposition}
\label{prop:amort-doesnt-help}Let $\rho_{A^{\prime}AB^{\prime}}$ be a state
and let $\mathcal{N}_{A\rightarrow B}$ be a quantum channel. Then%
\begin{equation}
R_{\max}(A^{\prime};BB^{\prime})_{\omega}\leq R_{\max}(\mathcal{N})+R_{\max
}(A^{\prime}A;B^{\prime})_{\rho}, \label{eq:r-max-amort-no-help-1}%
\end{equation}
where%
\begin{equation}
\omega_{A^{\prime}BB^{\prime}}=\mathcal{N}_{A\rightarrow B}(\rho_{A^{\prime
}AB^{\prime}}).
\end{equation}

\end{proposition}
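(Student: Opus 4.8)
The plan is to work entirely with the two semi-definite program formulations recalled in \eqref{eq:sdp-W} and \eqref{eq:Gamma-sdp}, never returning to the original $D_{\max}$ definitions. Since $R_{\max}(A;B)_\rho=\log_2 W(A;B)_\rho$ by \eqref{eq:alt-max-Rains-state} and $R_{\max}(\mathcal{N})=\log\Gamma(\mathcal{N})$ by \eqref{eq:alt-max-Rains-channel}, the claimed inequality \eqref{eq:r-max-amort-no-help-1} is equivalent, after exponentiating, to the multiplicative bound
\begin{equation}
W(A';BB')_\omega\leq\Gamma(\mathcal{N})\cdot W(A'A;B')_\rho .
\end{equation}
First I would fix optimal primal solutions for the two programs on the right: operators $C_{A'AB'},D_{A'AB'}\geq0$ with $T_{B'}(C_{A'AB'}-D_{A'AB'})\geq\rho_{A'AB'}$ and $\operatorname{Tr}\{C_{A'AB'}+D_{A'AB'}\}=W(A'A;B')_\rho$, together with operators $V_{RB},Y_{RB}\geq0$ obeying $T_B(V_{RB}-Y_{RB})\geq J_{RB}^{\mathcal{N}}$ and $\|\operatorname{Tr}_B\{V_{RB}+Y_{RB}\}\|_\infty=\Gamma(\mathcal{N})$, where $R$ is the Choi reference system isomorphic to $A$. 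The goal is then to glue these into a single feasible point of the program defining $W(A';BB')_\omega$.

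The construction I would use is to tensor the two solutions and project onto the maximally entangled vector implementing the Choi isomorphism. Concretely, set
\begin{align}
E_{A'BB'} &=\langle\Upsilon|_{AR}\,\big(C_{A'AB'}\otimes V_{RB}+D_{A'AB'}\otimes Y_{RB}\big)\,|\Upsilon\rangle_{AR},\nonumber\\
F_{A'BB'} &=\langle\Upsilon|_{AR}\,\big(C_{A'AB'}\otimes Y_{RB}+D_{A'AB'}\otimes V_{RB}\big)\,|\Upsilon\rangle_{AR}.
\end{align}
Positivity $E,F\geq0$ is immediate, since each is a sum of tensor products of positive operators conjugated by a fixed vector, and such conjugation preserves positivity. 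The key algebraic observation is that the ``swap'' pattern makes the difference factorize, $E_{A'BB'}-F_{A'BB'}=\langle\Upsilon|_{AR}\,(C-D)_{A'AB'}\otimes(V-Y)_{RB}\,|\Upsilon\rangle_{AR}$, so that, because the sandwich touches only $A$ and $R$,
\begin{equation}
T_{BB'}(E-F)=\langle\Upsilon|_{AR}\,T_{B'}(C-D)_{A'AB'}\otimes T_B(V-Y)_{RB}\,|\Upsilon\rangle_{AR}.
\end{equation}

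Now I would invoke the two feasibility constraints together with the elementary tensor-monotonicity fact that if $P\geq P_0\geq0$ and $Q\geq Q_0\geq0$ then $P\otimes Q\geq P_0\otimes Q_0$ (write $P\otimes Q-P_0\otimes Q_0=(P-P_0)\otimes Q+P_0\otimes(Q-Q_0)$). Taking $P=T_{B'}(C-D)$, $P_0=\rho_{A'AB'}$, $Q=T_B(V-Y)$, $Q_0=J_{RB}^{\mathcal{N}}$ — all admissible since $T_B(V-Y)\geq J^{\mathcal{N}}\geq0$ — and then applying the positivity-preserving sandwich gives $T_{BB'}(E-F)\geq\langle\Upsilon|_{AR}\,\rho_{A'AB'}\otimes J_{RB}^{\mathcal{N}}\,|\Upsilon\rangle_{AR}$, whose right-hand side is exactly $\mathcal{N}_{A\to B}(\rho_{A'AB'})=\omega_{A'BB'}$ by the Choi identity \eqref{eq:Choi-iso}. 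Hence $(E,F)$ is feasible for the output program.

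It remains to bound the objective. Since $E+F=\langle\Upsilon|_{AR}\,(C+D)_{A'AB'}\otimes(V+Y)_{RB}\,|\Upsilon\rangle_{AR}$, tracing out $A'BB'$ and using \eqref{eq:max-ent-trace} together with the transpose trick \eqref{eq:tr-trick} yields $\operatorname{Tr}\{E+F\}=\operatorname{Tr}\{H\,G^{T}\}$, where $H=\operatorname{Tr}_{A'B'}\{C+D\}\geq0$ and $G=\operatorname{Tr}_B\{V+Y\}\geq0$. For positive $H$ and $G$ one has $\operatorname{Tr}\{H\,G^{T}\}\leq\|G^{T}\|_\infty\operatorname{Tr}\{H\}=\|G\|_\infty\operatorname{Tr}\{H\}$, which equals $\Gamma(\mathcal{N})\cdot W(A'A;B')_\rho$ by the choice of optimal solutions. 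Therefore $W(A';BB')_\omega\leq\operatorname{Tr}\{E+F\}\leq\Gamma(\mathcal{N})\cdot W(A'A;B')_\rho$, and taking $\log_2$ of both sides completes the argument. I expect the one genuinely creative step to be discovering the correct glued ansatz for $(E,F)$ — in particular the swap pattern that forces $E-F$ to factor as $(C-D)\otimes(V-Y)$ while keeping $E$ and $F$ individually positive; once this ansatz is written down, feasibility and the objective bound both reduce to the tensor-monotonicity fact, the Choi identity, and a one-line trace estimate.
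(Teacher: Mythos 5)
Your proof is correct and takes essentially the same route as the paper's: the same reduction to the multiplicative inequality $W(A';BB')_\omega\leq\Gamma(\mathcal{N})\cdot W(A'A;B')_\rho$, the same swapped ansatz for $E_{A'BB'}$ and $F_{A'BB'}$, the same factorization of $E-F$ and Choi-identity feasibility check, and the same H\"{o}lder-type bound on the objective. The only differences are cosmetic: you spell out the tensor-monotonicity fact and the positivity hypotheses that the paper leaves implicit in citing its constraints \eqref{eq:W-constraint} and \eqref{eq:Gamma-constraint}.
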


\begin{proof}
By removing logarithms and applying \eqref{eq:alt-max-Rains-state} and
\eqref{eq:alt-max-Rains-channel}, the desired inequality is equivalent to the
following one:%
\begin{equation}
W(A^{\prime};BB^{\prime})_{\omega}\leq\Gamma(\mathcal{N})\cdot W(A^{\prime
}A;B^{\prime})_{\rho}, \label{eq:amortized-ineq}%
\end{equation}
and so we aim to prove this one. Exploiting the identity in \eqref{eq:sdp-W},
we find that%
\begin{equation}
W(A^{\prime}A;B^{\prime})_{\rho}=\min\operatorname{Tr}\{C_{A^{\prime
}AB^{\prime}}+D_{A^{\prime}AB^{\prime}}\},
\end{equation}
subject to the constraints%
\begin{align}
C_{A^{\prime}AB^{\prime}},\ D_{A^{\prime}AB^{\prime}}  &  \geq0,\\
T_{B^{\prime}}(C_{A^{\prime}AB^{\prime}}-D_{A^{\prime}AB^{\prime}})  &
\geq\rho_{A^{\prime}AB^{\prime}}, \label{eq:W-constraint}%
\end{align}
while the identity in \eqref{eq:Gamma-sdp} gives that%
\begin{equation}
\Gamma(\mathcal{N})=\min\left\Vert \operatorname{Tr}_{B}\{V_{SB}%
+Y_{SB}\}\right\Vert _{\infty},
\end{equation}
subject to the constraints%
\begin{align}
Y_{SB},V_{SB}  &  \geq0,\\
T_{B}(V_{SB}-Y_{SB})  &  \geq J_{SB}^{\mathcal{N}}.
\label{eq:Gamma-constraint}%
\end{align}
The identity in \eqref{eq:sdp-W} implies that the left-hand side of
\eqref{eq:amortized-ineq}\ is equal to%
\begin{equation}
W(A^{\prime};BB^{\prime})_{\omega}=\min\operatorname{Tr}\{E_{A^{\prime
}BB^{\prime}}+F_{A^{\prime}BB^{\prime}}\},
\end{equation}
subject to the constraints%
\begin{align}
E_{A^{\prime}BB^{\prime}},F_{A^{\prime}BB^{\prime}}  &  \geq
0,\label{eq:W-input-const-1}\\
\mathcal{N}_{A\rightarrow B}(\rho_{A^{\prime}AB^{\prime}})  &  \leq
T_{BB^{\prime}}(E_{A^{\prime}BB^{\prime}}-F_{A^{\prime}BB^{\prime}}).
\label{eq:W-input-const-2}%
\end{align}

With these SDP formulations in place, we can now establish the inequality in
\eqref{eq:amortized-ineq} by making judicious choices for $E_{A^{\prime
}BB^{\prime}}$ and $F_{A^{\prime}BB^{\prime}}$. Let $C_{A^{\prime}AB^{\prime}%
}$ and $D_{A^{\prime}AB^{\prime}}$ be optimal for $W(A^{\prime}A;B^{\prime
})_{\rho}$, and let $Y_{SB}$ and $V_{SB}$ be optimal for $\Gamma(\mathcal{N}%
)$. Let $|\Upsilon\rangle_{SA}$ be the maximally entangled vector, as defined
in \eqref{eq:max-ent-vec}. Pick%
\begin{align*}
E_{A^{\prime}BB^{\prime}}  &  =\langle\Upsilon|_{SA}C_{A^{\prime}AB^{\prime}%
}\otimes V_{SB}+D_{A^{\prime}AB^{\prime}}\otimes Y_{SB}|\Upsilon\rangle
_{SA},\\
F_{A^{\prime}BB^{\prime}}  &  =\langle\Upsilon|_{SA}C_{A^{\prime}AB^{\prime}%
}\otimes Y_{SB}+D_{A^{\prime}AB^{\prime}}\otimes V_{SB}|\Upsilon\rangle_{SA}.
\end{align*}
We note that these choices are somewhat similar to those made in the proof of
\cite[Proposition~6]{WFD17}, and they can be understood roughly via
\eqref{eq:Choi-iso} as a postselected teleportation of the optimal operators
of $W(A^{\prime}A;B^{\prime})_{\rho}$ through the optimal operators of
$\Gamma(\mathcal{N})$, with the optimal operators of $W(A^{\prime}A;B^{\prime
})_{\rho}$ being in correspondence with the input state $\rho_{A^{\prime
}AB^{\prime}}$ through \eqref{eq:W-constraint} and the optimal operators of
$\Gamma(\mathcal{N})$ being in correspondence with the Choi operator
$J_{SB}^{\mathcal{N}}$ through \eqref{eq:Gamma-constraint}. We then have that
$E_{A^{\prime}BB^{\prime}},F_{A^{\prime}BB^{\prime}}\geq0$ because
$C_{A^{\prime}AB^{\prime}}$, $D_{A^{\prime}AB^{\prime}}$, $Y_{SB}$,
$V_{SB}\geq0$. Consider that%
\begin{align}
T_{BB^{\prime}}(E_{A^{\prime}BB^{\prime}}-F_{A^{\prime}BB^{\prime}})  &
=T_{BB^{\prime}}\left[  \langle\Upsilon|_{SA}(C_{A^{\prime}AB^{\prime}%
}-D_{A^{\prime}AB^{\prime}})\otimes(V_{SB}-Y_{SB})|\Upsilon\rangle_{SA}\right]
\nonumber\\
&  =\langle\Upsilon|_{SA}T_{B^{\prime}}(C_{A^{\prime}AB^{\prime}}%
-D_{A^{\prime}AB^{\prime}})\otimes T_{B}(V_{SB}-Y_{SB})|\Upsilon\rangle
_{SA}\nonumber\\
&  \geq\langle\Upsilon|_{SA}\ \rho_{A^{\prime}AB^{\prime}}\otimes
J_{SB}^{\mathcal{N}}|\Upsilon\rangle_{SA}\nonumber\\
&  =\mathcal{N}_{A\rightarrow B}(\rho_{A^{\prime}AB^{\prime}}).
\end{align}
The inequality follows from \eqref{eq:W-constraint} and
\eqref{eq:Gamma-constraint}, and the last equality follows from
\eqref{eq:Choi-iso}. Also consider that%
\begin{align}
\operatorname{Tr}\{E_{A^{\prime}BB^{\prime}}+F_{A^{\prime}BB^{\prime}}\}  &
=\operatorname{Tr}\{\langle\Upsilon|_{SA}(C_{A^{\prime}AB^{\prime}%
}+D_{A^{\prime}AB^{\prime}})\otimes(V_{SB}+Y_{SB})|\Upsilon\rangle
_{SA}\}\nonumber\\
&  =\operatorname{Tr}\{(C_{A^{\prime}AB^{\prime}}+D_{A^{\prime}AB^{\prime}%
})T_{A}(V_{AB}+Y_{AB})\}\nonumber\\
&  =\operatorname{Tr}\{(C_{A^{\prime}AB^{\prime}}+D_{A^{\prime}AB^{\prime}%
})T_{A}(\operatorname{Tr}_{B}\left\{  V_{AB}+Y_{AB}\right\}  )\}\nonumber\\
&  \leq\operatorname{Tr}\{C_{A^{\prime}AB^{\prime}}+D_{A^{\prime}AB^{\prime}%
}\}\left\Vert T_{A}(\operatorname{Tr}_{B}\left\{  V_{AB}+Y_{AB}\right\}
)\right\Vert _{\infty}\nonumber\\
&  =\operatorname{Tr}\{C_{A^{\prime}AB^{\prime}}+D_{A^{\prime}AB^{\prime}%
}\}\left\Vert \operatorname{Tr}_{B}\left\{  V_{AB}+Y_{AB}\right\}  \right\Vert
_{\infty}\nonumber\\
&  =W(A^{\prime}A;B^{\prime})_{\rho}\cdot\Gamma(\mathcal{N}).
\label{eq:final-r-max-reasoning-amort-no-help}%
\end{align}
The second equality follows from \eqref{eq:tr-trick} and
\eqref{eq:max-ent-trace}. The inequality is a consequence of H\"{o}lder's
inequality. The final equality follows because the spectrum of an operator is
invariant under the action of a (full)\ transpose (note, in this case, that
$T_{A}$ is a full transpose because the operator $\operatorname{Tr}%
_{B}\left\{  V_{AB}+Y_{AB}\right\}  $ acts only on system $A$).

Thus, we can conclude that our choices of $E_{A^{\prime}BB^{\prime}}$ and
$F_{A^{\prime}BB^{\prime}}$ are feasible for $W(A^{\prime};BB^{\prime
})_{\omega}$. Since $W(A^{\prime};BB^{\prime})_{\omega}$ involves a
minimization over all $E_{A^{\prime}BB^{\prime}}$ and $F_{A^{\prime}%
BB^{\prime}}$ satisfying \eqref{eq:W-input-const-1} and
\eqref{eq:W-input-const-2}, this concludes our proof of \eqref{eq:amortized-ineq}.
\end{proof}

An immediate corollary of Proposition~\ref{prop:amort-doesnt-help}\ is the following:

\begin{corollary}
\label{cor:amort-not-enhance}Amortization does not enhance the max-Rains
information of a quantum channel $\mathcal{N}_{A\rightarrow B}$; i.e., the
following equality holds%
\begin{equation}
R_{\max,A}(\mathcal{N})=R_{\max}(\mathcal{N}).
\end{equation}

\end{corollary}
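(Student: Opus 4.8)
The plan is to prove the inequality \eqref{eq:r-max-amort-no-help-1} directly and then observe that it yields the corollary almost for free. Since both $R_{\max}(A';BB')_\omega$ and $R_{\max}(A'A;B')_\rho$ are logarithms of the SDP-values $W(\cdot)$, and $R_{\max}(\mathcal{N})=\log\Gamma(\mathcal{N})$, the additive inequality is equivalent to the multiplicative statement
\begin{equation}
W(A';BB')_\omega \leq \Gamma(\mathcal{N})\cdot W(A'A;B')_\rho. \nonumber
\end{equation}
So the first step is to exponentiate and reduce the problem to this product bound, writing out the three SDPs in \eqref{eq:sdp-W} and \eqref{eq:Gamma-sdp} explicitly: the minimization defining $W(A'A;B')_\rho$ over operators $C_{A'AB'},D_{A'AB'}$, the minimization defining $\Gamma(\mathcal{N})$ over $V_{SB},Y_{SB}$, and the minimization defining $W(A';BB')_\omega$ over $E_{A'BB'},F_{A'BB'}$.

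The core of the argument is to exhibit a single feasible point for the $W(A';BB')_\omega$ program whose objective value is at most $\Gamma(\mathcal{N})\cdot W(A'A;B')_\rho$. Because $W(A';BB')_\omega$ is a minimization, any feasible choice upper-bounds it, so I only need one good candidate. I would take $C,D$ optimal for $W(A'A;B')_\rho$ and $V,Y$ optimal for $\Gamma(\mathcal{N})$, and then ``teleport'' the first pair through the second using the postselected-teleportation identity \eqref{eq:Choi-iso}. Concretely, I would set $E_{A'BB'}$ and $F_{A'BB'}$ to be the $\langle\Upsilon|_{SA}(\cdots)|\Upsilon\rangle_{SA}$ sandwiches that pair the ``positive'' contributions ($C\otimes V$, $D\otimes Y$) into $E$ and the ``cross'' contributions ($C\otimes Y$, $D\otimes V$) into $F$. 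The motivation is that $E-F$ then factors as a sandwich of $(C-D)\otimes(V-Y)$, while $E+F$ becomes a sandwich of $(C+D)\otimes(V+Y)$; this sign bookkeeping is exactly what makes both the constraint and the objective behave well.

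With this choice, verifying feasibility splits into two checks. Positivity $E,F\geq 0$ is immediate since all four operators are positive semidefinite and the maximally entangled sandwich preserves positivity. For the operator constraint \eqref{eq:W-input-const-2}, I would compute $T_{BB'}(E-F)$, push the partial transposes inside so it splits as $T_{B'}(C-D)\otimes T_B(V-Y)$ inside the sandwich, apply the constraints \eqref{eq:W-constraint} and \eqref{eq:Gamma-constraint} to lower-bound this by the sandwich of $\rho_{A'AB'}\otimes J^{\mathcal{N}}_{SB}$, and finally invoke \eqref{eq:Choi-iso} to identify that sandwich with $\mathcal{N}_{A\to B}(\rho_{A'AB'})=\omega_{A'BB'}$. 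For the objective, $\operatorname{Tr}\{E+F\}$ equals the trace of the sandwich of $(C+D)\otimes(V+Y)$; using the transpose trick \eqref{eq:tr-trick} together with \eqref{eq:max-ent-trace} collapses the $S$ and $A$ systems and partially traces out $B$, leaving $\operatorname{Tr}\{(C+D)\,T_A(\operatorname{Tr}_B\{V+Y\})\}$, which H\"older's inequality bounds by $\operatorname{Tr}\{C+D\}\cdot\|T_A(\operatorname{Tr}_B\{V+Y\})\|_\infty=W(A'A;B')_\rho\cdot\Gamma(\mathcal{N})$, the last step using that a full transpose preserves the spectrum hence the $\infty$-norm.

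The main obstacle I anticipate is the index bookkeeping in the sandwich manipulations: one must be careful that the $\langle\Upsilon|_{SA}$ contraction acts on the correct pair of systems (the input system $A$ of the channel and its isomorphic copy $S$), that the partial transpose $T_{BB'}$ really decouples across the two tensor factors as $T_{B'}\otimes T_B$, and that after the transpose trick the residual operator $\operatorname{Tr}_B\{V+Y\}$ indeed lives only on $A$ so that $T_A$ is a full transpose. Getting these identifications right is precisely what lets \eqref{eq:Choi-iso}, \eqref{eq:tr-trick}, \eqref{eq:max-ent-trace} and H\"older's inequality each apply cleanly. Once \eqref{eq:amortized-ineq} is established, the corollary follows: combining \eqref{eq:r-max-amort-no-help-1} with the definition of $R_{\max,A}(\mathcal{N})$ gives $R_{\max,A}(\mathcal{N})\leq R_{\max}(\mathcal{N})$, while the reverse inequality is the generic bound \eqref{eq:amortized-to-usual-ineq}, so equality holds.
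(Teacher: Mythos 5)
Your proposal is correct and follows essentially the same route as the paper: you re-derive Proposition~\ref{prop:amort-doesnt-help} by exponentiating to the multiplicative form \eqref{eq:amortized-ineq}, making the identical choices of $E_{A^{\prime}BB^{\prime}}$ and $F_{A^{\prime}BB^{\prime}}$ (the postselected-teleportation sandwiches pairing $C\otimes V$, $D\otimes Y$ against $C\otimes Y$, $D\otimes V$), verifying feasibility via \eqref{eq:W-constraint}, \eqref{eq:Gamma-constraint}, and \eqref{eq:Choi-iso}, and bounding the objective with the transpose trick, \eqref{eq:max-ent-trace}, and H\"older's inequality. The derivation of the corollary from the proposition---combining the amortized inequality with the generic bound \eqref{eq:amortized-to-usual-ineq}---also matches the paper's argument exactly.
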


\begin{proof}
The inequality $R_{\max,A}(\mathcal{N})\geq R_{\max}(\mathcal{N})$ always
holds, as reviewed in \eqref{eq:amortized-to-usual-ineq}. The other inequality
is an immediate consequence of Proposition~\ref{prop:amort-doesnt-help}.
Letting $\rho_{A^{\prime}AB^{\prime}}$ denote an arbitrary input state,
Proposition~\ref{prop:amort-doesnt-help} implies that%
\begin{equation}
R_{\max}(A^{\prime};BB^{\prime})_{\omega}-R_{\max}(A^{\prime}A;B^{\prime
})_{\rho}\leq R_{\max}(\mathcal{N}),
\end{equation}
where $\omega_{A^{\prime}BB^{\prime}}=\mathcal{N}_{A\rightarrow B}%
(\rho_{A^{\prime}AB^{\prime}})$. Since the inequality holds for any state
$\rho_{A^{\prime}AB^{\prime}}$, it holds for the supremum over all such input
states, leading to $R_{\max,A}(\mathcal{N})\leq R_{\max}(\mathcal{N})$.
\end{proof}

\section{Application to PPT-P-assisted quantum communication}

\label{sec:strong-converse-bnd}We now give our main application of
Proposition~\ref{prop:amort-doesnt-help}, which is that the max-Rains
information is a single-letter, strong-converse upper bound on the
PPT-P-assisted quantum capacity of any channel. The term \textquotedblleft
single-letter\textquotedblright\ refers to the fact that the max-Rains
information requires an optimization over a single use of the channel. As we
remarked previously, the max-Rains information is efficiently computable via
semi-definite programming, as observed in \cite{WD16,WFD17}. Finally, the
bound is a strong converse bound because, as we will show, if the rate of a
sequence of PPT-P-assisted quantum communication protocols exceeds the
max-Rains information, then the error probability of these protocols
necessarily tends to one exponentially fast in the number of channel uses.

\subsection{Protocol for PPT-P-assisted quantum communication}

\label{sec:PPT-protocol}We begin by reviewing the structure of a
PPT-P-assisted quantum communication protocol, along the lines discussed in
\cite{KW17}. In such a protocol, a sender Alice and a receiver Bob are
spatially separated and connected by a quantum channel $\mathcal{N}%
_{A\rightarrow B}$. They begin by performing a PPT-P channel $\mathcal{P}%
_{\emptyset\rightarrow A_{1}^{\prime}A_{1}B_{1}^{\prime}}^{(1)}$, which leads
to a PPT state $\rho_{A_{1}^{\prime}A_{1}B_{1}^{\prime}}^{(1)}$, where
$A_{1}^{\prime}$ and $B_{1}^{\prime}$ are systems that are finite-dimensional
but arbitrarily large. The system $A_{1}$ is such that it can be fed into the
first channel use. Alice sends system $A_{1}$ through the first channel use,
leading to a state $\sigma_{A_{1}^{\prime}B_{1}B_{1}^{\prime}}^{(1)}%
\equiv\mathcal{N}_{A_{1}\rightarrow B_{1}}(\rho_{A_{1}^{\prime}A_{1}%
B_{1}^{\prime}}^{(1)})$. Alice and Bob then perform the PPT-P channel
$\mathcal{P}_{A_{1}^{\prime}B_{1}B_{1}^{\prime}\rightarrow A_{2}^{\prime}%
A_{2}B_{2}^{\prime}}^{(2)}$, which leads to the state%
\begin{equation}
\rho_{A_{2}^{\prime}A_{2}B_{2}^{\prime}}^{(2)}\equiv\mathcal{P}_{A_{1}%
^{\prime}B_{1}B_{1}^{\prime}\rightarrow A_{2}^{\prime}A_{2}B_{2}^{\prime}%
}^{(2)}(\sigma_{A_{1}^{\prime}B_{1}B_{1}^{\prime}}^{(1)}).
\end{equation}
Alice sends system $A_{2}$ through the second channel use $\mathcal{N}%
_{A_{2}\rightarrow B_{2}}$, leading to the state $\sigma_{A_{2}^{\prime}%
B_{2}B_{2}^{\prime}}^{(2)}\equiv\mathcal{N}_{A_{2}\rightarrow B_{2}}%
(\rho_{A_{2}^{\prime}A_{2}B_{2}^{\prime}}^{(1)})$. This process iterates:\ the
protocol uses the channel $n$ times. In general, we have the following states
for all $i\in\{2,\ldots,n\}$:%
\begin{align}
\rho_{A_{i}^{\prime}A_{i}B_{i}^{\prime}}^{(i)}  &  \equiv\mathcal{P}%
_{A_{i-1}^{\prime}B_{i-1}B_{i-1}^{\prime}\rightarrow A_{i}^{\prime}A_{i}%
B_{i}^{\prime}}^{(i)}(\sigma_{A_{i-1}^{\prime}B_{i-1}B_{i-1}^{\prime}}%
^{(i-1)}),\\
\sigma_{A_{i}^{\prime}B_{i}B_{i}^{\prime}}^{(i)}  &  \equiv\mathcal{N}%
_{A_{i}\rightarrow B_{i}}(\rho_{A_{i}^{\prime}A_{i}B_{i}^{\prime}}^{(i)}),
\end{align}
where $\mathcal{P}_{A_{i-1}^{\prime}B_{i-1}B_{i-1}^{\prime}\rightarrow
A_{i}^{\prime}A_{i}B_{i}^{\prime}}^{(i)}$ is a PPT\ channel. The final step of
the protocol consists of a PPT-P channel $\mathcal{P}_{A_{n}^{\prime}%
B_{n}B_{n}^{\prime}\rightarrow M_{A}M_{B}}^{(n+1)}$, which generates the
systems $M_{A}$ and $M_{B}$ for Alice and Bob, respectively. The protocol's
final state is as follows:%
\begin{equation}
\omega_{M_{A}M_{B}}\equiv\mathcal{P}_{A_{n}^{\prime}B_{n}B_{n}^{\prime
}\rightarrow M_{A}M_{B}}^{(n+1)}(\sigma_{A_{n}^{\prime}B_{n}B_{n}^{\prime}%
}^{(n)}).
\end{equation}
Figure~\ref{fig:private-code}\ depicts such a protocol.

The goal of the protocol is that the final state $\omega_{M_{A}M_{B}}$ is
close to a maximally entangled state. Fix $n,M\in\mathbb{N}$ and
$\varepsilon\in\lbrack0,1]$. The original protocol is an $(n,M,\varepsilon)$
protocol if the channel is used $n$ times as discussed above, $\left\vert
M_{A}\right\vert =\left\vert M_{B}\right\vert =M$, and if%
\begin{align}
F(\omega_{M_{A}M_{B}},\Phi_{M_{A}M_{B}})  &  =\langle\Phi|_{M_{A}M_{B}}%
\omega_{M_{A}M_{B}}|\Phi\rangle_{M_{A}M_{B}}\\
&  \geq1-\varepsilon, \label{eq:fidelity-assump}%
\end{align}
where the fidelity $F(\tau,\kappa)\equiv\left\Vert \sqrt{\tau}\sqrt{\kappa
}\right\Vert _{1}^{2}$ \cite{U76} and the maximally entangled state
$\Phi_{M_{A}M_{B}}=|\Phi\rangle\langle\Phi|_{M_{A}M_{B}}$ is defined from%
\begin{equation}
|\Phi\rangle_{M_{A}M_{B}}\equiv\frac{1}{\sqrt{M}}\sum_{m=1}^{M}|m\rangle
_{M_{A}}\otimes|m\rangle_{M_{B}}.
\end{equation}

A rate $R$ is achievable for PPT-P-assisted quantum communication if for all
$\varepsilon\in(0,1]$, $\delta>0$, and sufficiently large$~n$, there exists an
$(n,2^{n\left(  R-\delta\right)  },\varepsilon)$ protocol. The PPT-P-assisted
quantum capacity of a channel $\mathcal{N}$, denoted as
$Q^{\operatorname{PPT-P},\leftrightarrow}(\mathcal{N})$, is equal to the
supremum of all achievable rates.

On the other hand, a rate $R$ is a strong converse rate for PPT-P-assisted
quantum communication if for all $\varepsilon\in[0,1)$, $\delta>0$, and
sufficiently large$~n$, there does not exist an $(n,2^{n\left(  R+\delta
\right)  },\varepsilon)$ protocol. The strong converse PPT-P-assisted quantum
capacity $Q^{\operatorname{PPT-P},\leftrightarrow\dag}(\mathcal{N})$ is equal
to the infimum of all strong converse rates. We say that a channel obeys the
strong converse property for PPT-P-assisted quantum communication if
$Q^{\operatorname{PPT-P},\leftrightarrow}(\mathcal{N})=Q^{\operatorname{PPT-P}%
,\leftrightarrow\dag}(\mathcal{N})$.

We can also consider the whole development above when we only allow the
assistance of LOCC\ channels instead of PPT\ channels. In this case, we have
similar notions as above, and then we arrive at the LOCC-assisted quantum
capacity $Q^{\leftrightarrow}(\mathcal{N})$ and the strong converse
LOCC-assisted quantum capacity $Q^{\leftrightarrow\dag}(\mathcal{N})$. It then
immediately follows that%
\begin{align}
Q^{\leftrightarrow}(\mathcal{N})  &  \leq Q^{\operatorname{PPT-P}%
,\leftrightarrow}(\mathcal{N}) ,\\
Q^{\leftrightarrow\dag}(\mathcal{N})  &  \leq Q^{\operatorname{PPT-P}%
,\leftrightarrow\dag}(\mathcal{N})
\end{align}
because every LOCC\ channel is a PPT\ channel.

\subsection{Max-Rains information as a strong converse rate for PPT-P-assisted
quantum communication}

We now prove the following upper bound on the communication rate $\frac{1}%
{n}\log_{2}M$ (qubits per channel use) of any $(n,M,\varepsilon)$
PPT-P-assisted protocol:

\begin{theorem}
Fix $n,M\in\mathbb{N}$ and $\varepsilon\in(0,1)$. The following bound holds
for an $(n,M,\varepsilon)$ protocol for PPT-P-assisted quantum communication
over a quantum channel $\mathcal{N}$:%
\begin{equation}
\log_{2}M\leq n R_{\max}(\mathcal{N})+\log_{2}\!\left(  \frac{1}%
{1-\varepsilon}\right)  . \label{eq:max-Rains-converse-bound}%
\end{equation}

\end{theorem}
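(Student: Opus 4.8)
The plan is to combine the amortization collapse of Corollary~\ref{cor:amort-not-enhance} with a telescoping argument across the $n$ channel uses to bound the max-Rains relative entropy of the final state $\omega_{M_A M_B}$ by $n R_{\max}(\mathcal{N})$, and then to convert that into the claimed bound on $\log_2 M$ using the fidelity condition \eqref{eq:fidelity-assump}.

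For the first part, I would start from the final PPT-P channel $\mathcal{P}^{(n+1)}$ and use monotonicity of $R_{\max}$ under PPT-preserving channels to obtain $R_{\max}(M_A;M_B)_\omega \le R_{\max}(A_n';B_n B_n')_{\sigma^{(n)}}$. Then I alternate two estimates. Proposition~\ref{prop:amort-doesnt-help}, applied to the $i$-th channel use with input $\rho^{(i)}_{A_i'A_iB_i'}$ and output $\sigma^{(i)}_{A_i'B_iB_i'}$, gives
\begin{equation}
R_{\max}(A_i';B_iB_i')_{\sigma^{(i)}} \le R_{\max}(\mathcal{N}) + R_{\max}(A_i'A_i;B_i')_{\rho^{(i)}},
\end{equation}
while monotonicity under the PPT-P channel $\mathcal{P}^{(i)}$ that maps $\sigma^{(i-1)}$ to $\rho^{(i)}$ gives $R_{\max}(A_i'A_i;B_i')_{\rho^{(i)}} \le R_{\max}(A_{i-1}';B_{i-1}B_{i-1}')_{\sigma^{(i-1)}}$. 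Iterating these alternately from $i=n$ down to $i=1$ yields a telescoping chain in which each application of the Proposition contributes one copy of $R_{\max}(\mathcal{N})$. The recursion terminates because $\rho^{(1)}_{A_1'A_1B_1'}$ is a PPT state, so $R_{\max}(A_1'A_1;B_1')_{\rho^{(1)}} = 0$ (take the comparison operator in the definition to be $\rho^{(1)}$ itself). The net result is $R_{\max}(M_A;M_B)_\omega \le n R_{\max}(\mathcal{N})$.

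For the second part, I would establish the matching lower bound $R_{\max}(M_A;M_B)_\omega \ge \log_2\!\big(M(1-\varepsilon)\big)$ directly from the definition $R_{\max}(M_A;M_B)_\omega = \min_{\sigma\in\operatorname{PPT}^{\prime}(M_A:M_B)} D_{\max}(\omega\|\sigma)$. Fix any feasible $\sigma_{M_A M_B}$ and set $\mu = 2^{D_{\max}(\omega\|\sigma)}$, so $\omega \le \mu\sigma$. Pairing this operator inequality with the rank-$M$ maximally entangled projector $\Phi_{M_A M_B}$ and invoking \eqref{eq:fidelity-assump} gives
\begin{equation}
1-\varepsilon \le \operatorname{Tr}\{\Phi_{M_A M_B}\,\omega_{M_A M_B}\} \le \mu\,\operatorname{Tr}\{\Phi_{M_A M_B}\,\sigma_{M_A M_B}\}.
\end{equation}
The crux is the overlap estimate $\operatorname{Tr}\{\Phi\,\sigma\}\le 1/M$. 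Using self-adjointness of the partial transpose under the Hilbert--Schmidt inner product and the identity $T_{M_B}(\Phi_{M_A M_B}) = \tfrac{1}{M}F$, where $F$ is the swap operator on $M_A M_B$ (so that $\|F\|_\infty = 1$), I obtain
\begin{equation}
\operatorname{Tr}\{\Phi\,\sigma\} = \operatorname{Tr}\{T_{M_B}(\Phi)\,T_{M_B}(\sigma)\} = \tfrac{1}{M}\operatorname{Tr}\{F\,T_{M_B}(\sigma)\} \le \tfrac{1}{M}\|F\|_\infty \|T_{M_B}(\sigma)\|_1 \le \tfrac{1}{M},
\end{equation}
where the first inequality is H\"older's and the last uses the $\operatorname{PPT}^{\prime}$ constraint $\|T_{M_B}(\sigma)\|_1 \le 1$. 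Hence $\mu \ge M(1-\varepsilon)$ for every feasible $\sigma$, and minimizing over $\sigma$ gives the desired lower bound.

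Chaining the two parts yields $\log_2 M + \log_2(1-\varepsilon) \le R_{\max}(M_A;M_B)_\omega \le n R_{\max}(\mathcal{N})$, which rearranges into \eqref{eq:max-Rains-converse-bound}. I expect the telescoping step to be routine bookkeeping given monotonicity and the Proposition; the main work is the lower bound, and within it the overlap estimate $\operatorname{Tr}\{\Phi\,\sigma\}\le 1/M$, which is exactly where the $\operatorname{PPT}^{\prime}$ relaxation and the swap-operator form of $T_{M_B}(\Phi)$ carry the argument.
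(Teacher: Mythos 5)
Your proposal is correct and follows essentially the same route as the paper's proof: the identical telescoping of monotonicity under the interleaved PPT-P channels with Proposition~\ref{prop:amort-doesnt-help} (terminated by $R_{\max}(A_{1}^{\prime}A_{1};B_{1}^{\prime})_{\rho^{(1)}}=0$ for the initial PPT state), combined with the lower bound $R_{\max}(M_{A};M_{B})_{\omega}\geq\log_{2}\!\left[\left(1-\varepsilon\right)M\right]$ obtained from the entanglement test. The only cosmetic differences are that you derive the overlap estimate $\operatorname{Tr}\{\Phi_{M_{A}M_{B}}\sigma_{M_{A}M_{B}}\}\leq1/M$ for $\sigma_{M_{A}M_{B}}\in\operatorname{PPT}^{\prime}(M_{A}\!:\!M_{B})$ from scratch via the swap-operator identity $T_{M_{B}}(\Phi_{M_{A}M_{B}})=F/M$ and H\"{o}lder's inequality, where the paper simply cites \cite[Lemma~2]{R99}, and that you phrase the conversion step as a direct pairing of the operator inequality $\omega\leq\mu\sigma$ with $\Phi_{M_{A}M_{B}}$ rather than invoking data processing of $D_{\max}$ under the measurement $\{\Phi_{M_{A}M_{B}},I-\Phi_{M_{A}M_{B}}\}$.
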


\begin{proof}
For convenience of the reader, we give a complete proof, but we note that some
of the essential steps are available in prior works
\cite{Christandl2017,RKBKMA17,KW17}. From the assumption in
\eqref{eq:fidelity-assump}, it follows that%
\begin{equation}
\operatorname{Tr}\{\Phi_{M_{A}M_{B}}\omega_{M_{A}M_{B}}\}\geq1-\varepsilon,
\end{equation}
while \cite[Lemma~2]{R99} implies that%
\begin{equation}
\operatorname{Tr}\{\Phi_{M_{A}M_{B}}\sigma_{M_{A}M_{B}}\}\leq\frac{1}%
{M},\label{eq:ent-test-PPT-states}%
\end{equation}
for all $\sigma_{M_{A}M_{B}}\in\operatorname{PPT}^{\prime}(M_{A}\!:\!M_{B})$.
So under an \textquotedblleft entanglement test,\textquotedblright\ i.e., a
measurement of the form $\left\{  \Phi_{M_{A}M_{B}},I_{M_{A}M_{B}}-\Phi
_{M_{A}M_{B}}\right\}  $ and applying the data processing inequality for the
max-relative entropy, we find for all $\sigma_{M_{A}M_{B}}\in
\operatorname{PPT}^{\prime}(M_{A}\!:\!M_{B})$ that%
\begin{align}
D_{\max}(\omega_{M_{A}M_{B}}\Vert\sigma_{M_{A}M_{B}})  & \geq D_{\max
}(\{p,1-p\}\Vert\{q,\operatorname{Tr}\{\sigma_{M_{A}M_{B}}\}-q\})\\
& =\log_{2}\max\{p/q,(1-p)/(\operatorname{Tr}\{\sigma_{M_{A}M_{B}}\}-q)\}\\
& \geq\log_{2}(p/q)\\
& \geq\log_{2}\left[  \left(  1-\varepsilon\right)  M\right]  ,
\end{align}
where $p\equiv\operatorname{Tr}\{\Phi_{M_{A}M_{B}}\omega_{M_{A}M_{B}}\}$ and
$q=\operatorname{Tr}\{\Phi_{M_{A}M_{B}}\sigma_{M_{A}M_{B}}\}$. Since the above
chain of inequalities holds for all $\sigma_{M_{A}M_{B}}\in\operatorname{PPT}%
^{\prime}(M_{A}\!:\!M_{B})$, we conclude that%
\begin{equation}
R_{\max}(M_{A};M_{B})_{\omega}\geq\log_{2}\left[  \left(  1-\varepsilon
\right)  M\right]  .\label{eq:first-bound-final-state}%
\end{equation}
From the monotonicity of the Rains relative entropy with respect to
PPT-preserving channels \cite{R01,TWW14}, we find that%
\begin{align}
R_{\max}(M_{A};M_{B})_{\omega} &  \leq R_{\max}(A_{n}^{\prime};B_{n}%
B_{n}^{\prime})_{\sigma^{(n)}}\\
&  =R_{\max}(A_{n}^{\prime};B_{n}B_{n}^{\prime})_{\sigma^{(n)}}-R_{\max}%
(A_{1}^{\prime}A_{1};B_{1}^{\prime})_{\rho^{(1)}}\\
&  =R_{\max}(A_{n}^{\prime};B_{n}B_{n}^{\prime})_{\sigma^{(n)}}+\left[
\sum_{i=2}^{n}R_{\max}(A_{i}^{\prime}A_{i};B_{i}^{\prime})_{\rho^{(i)}%
}-R_{\max}(A_{i}^{\prime}A_{i};B_{i}^{\prime})_{\rho^{(i)}}\right]
\nonumber\\
&  \qquad-R_{\max}(A_{1}^{\prime}A_{1};B_{1}^{\prime})_{\rho^{(1)}}\\
&  \leq\sum_{i=1}^{n}\left[  R_{\max}(A_{i}^{\prime};B_{i}B_{i}^{\prime
})_{\sigma^{(i)}}-R_{\max}(A_{i}^{\prime}A_{i};B_{i}^{\prime})_{\rho^{(i)}%
}\right]  \\
&  \leq nR_{\max}(\mathcal{N}).\label{eq:last-bound-amortized-proof}%
\end{align}
The first equality follows because the state $\rho_{A_{1}^{\prime}A_{1}%
B_{1}^{\prime}}^{(1)}$ is a PPT state with vanishing max-Rains relative
entropy. The second equality follows by adding and subtracting terms. The
second inequality follows because $R_{\max}(A_{i}^{\prime}A_{i};B_{i}^{\prime
})_{\rho^{(i)}}\leq R_{\max}(A_{i-1}^{\prime};B_{i-1}B_{i-1}^{\prime}%
)_{\sigma^{(i-1)}}$ for all $i\in\{2,\ldots,n\}$, due to monotonicity of the
Rains relative entropy with respect to PPT-P channels. The final inequality
follows by applying Proposition~\ref{prop:amort-doesnt-help}\ to each term
$R_{\max}(A_{n}^{\prime};B_{n}B_{n}^{\prime})_{\sigma^{(i)}}-R_{\max}%
(A_{i}^{\prime}A_{i};B_{i}^{\prime})_{\rho^{(i)}}$. Combining
\eqref{eq:first-bound-final-state}\ and \eqref{eq:last-bound-amortized-proof},
we arrive at the inequality in \eqref{eq:max-Rains-converse-bound}.
\end{proof}

\begin{remark}
The bound in \eqref{eq:max-Rains-converse-bound}\ can also be rewritten in the
following way:%
\begin{equation}
1-\varepsilon\leq2^{-n\left[  Q-R_{\max}(\mathcal{N})\right]  },
\end{equation}
where we set the rate $Q=\frac{1}{n}\log_{2}M$. Thus, if the communication
rate $Q$ is strictly larger than the max-Rains information $R_{\max
}(\mathcal{N})$, then the fidelity of the transmission ($1-\varepsilon$)
decays exponentially fast to zero in the number $n$ of channel uses.
\end{remark}

An immediate corollary of the above is the following strong converse statement:

\begin{corollary}
The strong converse PPT-P-assisted quantum capacity is bounded from above by
the max-Rains information:%
\begin{equation}
Q^{\operatorname{PPT-P},\leftrightarrow\dag}(\mathcal{N})\leq R_{\max
}(\mathcal{N}).
\end{equation}

\end{corollary}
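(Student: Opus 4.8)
The plan is to derive this corollary directly from the single-letter per-protocol bound in \eqref{eq:max-Rains-converse-bound} by unwinding the definition of the strong converse PPT-P-assisted quantum capacity. Recall that $Q^{\operatorname{PPT-P},\leftrightarrow\dag}(\mathcal{N})$ is the infimum over all strong converse rates, where $R$ is a strong converse rate if for every $\varepsilon\in[0,1)$, every $\delta>0$, and all sufficiently large $n$, no $(n,2^{n(R+\delta)},\varepsilon)$ protocol exists. It therefore suffices to show that every rate $R>R_{\max}(\mathcal{N})$ is a strong converse rate; taking the infimum over all such $R$ then yields the claimed bound $Q^{\operatorname{PPT-P},\leftrightarrow\dag}(\mathcal{N})\leq R_{\max}(\mathcal{N})$.

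First I would fix $R>R_{\max}(\mathcal{N})$, $\varepsilon\in[0,1)$, and $\delta>0$, and argue by contradiction: suppose an $(n,2^{n(R+\delta)},\varepsilon)$ protocol exists. Substituting $M=2^{n(R+\delta)}$ into \eqref{eq:max-Rains-converse-bound} and rearranging gives
\begin{equation}
1-\varepsilon\leq 2^{-n[(R+\delta)-R_{\max}(\mathcal{N})]},
\end{equation}
which is precisely the reformulation recorded in the Remark following the theorem. Since $R>R_{\max}(\mathcal{N})$, the exponent $(R+\delta)-R_{\max}(\mathcal{N})$ is strictly positive, so the right-hand side tends to $0$ exponentially fast as $n\to\infty$, whereas the left-hand side $1-\varepsilon$ is a fixed strictly positive constant. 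Hence for all sufficiently large $n$ the inequality is violated, contradicting the assumed existence of such a protocol. This establishes that $R$ is a strong converse rate.

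There is no serious mathematical obstacle here, since essentially all of the work is carried by the bound \eqref{eq:max-Rains-converse-bound} (and ultimately by Proposition~\ref{prop:amort-doesnt-help}). The only point requiring care is the logical structure of the definitions: one must check that the contradiction can be produced uniformly for every fixed $\varepsilon\in[0,1)$ and $\delta>0$, which is immediate because the threshold on $n$ depends only on $\varepsilon$, $\delta$, and the fixed positive gap $R-R_{\max}(\mathcal{N})$. I would close by noting that since $R>R_{\max}(\mathcal{N})$ was arbitrary, the infimum of all strong converse rates is at most $R_{\max}(\mathcal{N})$, which completes the proof.
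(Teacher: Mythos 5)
Your proof is correct and is exactly the argument the paper intends: the paper leaves the corollary as an ``immediate'' consequence of the bound in \eqref{eq:max-Rains-converse-bound} (as rewritten in the Remark), and your proposal simply spells out the unwinding of the definition of a strong converse rate, with the contradiction driven by the exponential decay of $2^{-n[(R+\delta)-R_{\max}(\mathcal{N})]}$. The only cosmetic point is that the theorem is stated for $\varepsilon\in(0,1)$ while the strong-converse definition quantifies over $\varepsilon\in[0,1)$; the case $\varepsilon=0$ is immediate since an $(n,M,0)$ protocol is also an $(n,M,\varepsilon')$ protocol for any $\varepsilon'\in(0,1)$.
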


\section{Amortization does not increase a channel's max-relative entropy of
entanglement}

\label{sec:alt-proof-emax}One of the main results of \cite{Christandl2017} is
that amortization does not increase a channel's max-relative entropy of
entanglement; i.e.,%
\begin{equation}
E_{\max,A}(\mathcal{N})=E_{\max}(\mathcal{N}),
\label{eq:max-rel-amort-no-help}%
\end{equation}
where $E_{\max}(\mathcal{N})$ denotes a channel's max-relative entropy of
entanglement (we will define this shortly). The authors of
\cite{Christandl2017} proved \eqref{eq:max-rel-amort-no-help}\ by employing
the methods of complex interpolation \cite{BL76}. The main application of
\eqref{eq:max-rel-amort-no-help} is that $E_{\max}(\mathcal{N})$ is a strong
converse upper bound on the secret-key-agreement capacity of a quantum channel
\cite{Christandl2017}\ (this is defined as the private capacity of the
channel, when arbitrary LOCC is allowed between every channel use---see
\cite{WTB16} or \cite{Christandl2017}\ for a definition).

In this section, we provide an alternate proof of
\eqref{eq:max-rel-amort-no-help}, which is along the lines of the proofs of
Proposition~\ref{prop:amort-doesnt-help}\ and
Corollary~\ref{cor:amort-not-enhance}. We think that this approach brings a
different perspective to the result of \cite{Christandl2017} and could
potentially be useful in future applications.

To begin with, let us recall the definition of the max-relative entropy of
entanglement of a bipartite state $\rho_{AB}$ \cite{D09}:%
\begin{equation}
E_{\max}(A;B)_{\rho}=\min_{\sigma_{AB}\in\operatorname{SEP}(A:B)}D_{\max}%
(\rho_{AB}\Vert\sigma_{AB}). \label{eq:e-max}%
\end{equation}
Let $\overrightarrow{\operatorname{SEP}}(A:B)$ denote the cone of all
separable operators, i.e., $X_{AB}\in\overrightarrow{\operatorname{SEP}%
}(A\!:\!B)$ if there exists a positive integer $L$ and positive semi-definite
operators $\{P_{A}^{x}\}_{x}$ and $\{Q_{B}^{x}\}_{x}$ such that $X_{AB}%
=\sum_{x=1}^{L}P_{A}^{x}\otimes Q_{B}^{x}$. The arrow in $\overrightarrow
{\operatorname{SEP}}(A:B)$ is meant to remind the reader of ``cone'' and is
not intended to indicate any directionality between the $A$ and $B$ systems.
In what follows, we sometimes employ the shorthands $\operatorname{SEP}$ and
$\overrightarrow{\operatorname{SEP}}$ when the bipartite cuts are clear from
the context. Then we have the following alternative expression for the
max-relative entropy of entanglement:

\begin{lemma}
\label{lem:alt-emax}Let $\rho_{AB}$ be a bipartite state. Then%
\begin{equation}
E_{\max}(A;B)_{\rho}=\log_{2}W_{\operatorname{sep}}(A;B)_{\rho},
\end{equation}
where%
\begin{equation}
W_{\operatorname{sep}}(A;B)_{\rho}=\min_{X_{AB}\in\overrightarrow
{\operatorname{SEP}}}\left\{  \operatorname{Tr}\{X_{AB}\}:\rho_{AB}\leq
X_{AB}\right\}  .
\end{equation}

\end{lemma}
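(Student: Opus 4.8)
The plan is to prove the equality by showing that the two optimizations defining $E_{\max}(A;B)_{\rho}$ and $\log_{2}W_{\operatorname{sep}}(A;B)_{\rho}$ are literally the same program written in two different parametrizations; this is the $E_{\max}$-analog of the identity \eqref{eq:alt-max-Rains-state} for $R_{\max}$. The only structural difference between the two is that $E_{\max}$ optimizes over normalized separable \emph{states} $\sigma_{AB}$ together with the scalar $2^{\lambda}$ hidden inside $D_{\max}$, whereas $W_{\operatorname{sep}}$ optimizes over the full separable \emph{cone} $\overrightarrow{\operatorname{SEP}}$. Absorbing the scalar into the operator converts one problem into the other.

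First I would remove the logarithm, so that the goal becomes $2^{E_{\max}(A;B)_{\rho}}=W_{\operatorname{sep}}(A;B)_{\rho}$, and then unfold the definition of $D_{\max}$. Writing $\mu=2^{\lambda}$, so that $\mu$ ranges over $(0,\infty)$ as $\lambda$ ranges over $\mathbb{R}$, I obtain
\begin{equation}
2^{E_{\max}(A;B)_{\rho}}=\min_{\sigma_{AB}\in\operatorname{SEP}}\ \min\{\mu>0:\rho_{AB}\leq\mu\,\sigma_{AB}\}=\min_{(\mu,\sigma_{AB})}\{\mu>0:\rho_{AB}\leq\mu\,\sigma_{AB}\}.
\end{equation}
Both this value and $W_{\operatorname{sep}}(A;B)_{\rho}$ are finite, since the maximally mixed state is separable and a sufficiently large multiple of it dominates $\rho_{AB}$; and each inner minimum is attained because the positive semi-definite cone is closed. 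Next I would perform the change of variables $X_{AB}=\mu\,\sigma_{AB}$: since $\sigma_{AB}$ is a state, $\operatorname{Tr}\{X_{AB}\}=\mu$, and the constraint $\rho_{AB}\leq\mu\,\sigma_{AB}$ becomes exactly $\rho_{AB}\leq X_{AB}$, so the objective and constraint match those of $W_{\operatorname{sep}}$.

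What remains is to verify that as $(\mu,\sigma_{AB})$ ranges over $(0,\infty)$ times $\operatorname{SEP}$, the operator $X_{AB}=\mu\,\sigma_{AB}$ ranges over precisely the nonzero elements of $\overrightarrow{\operatorname{SEP}}$ (and nonzero is no restriction, since $\rho_{AB}\leq X_{AB}$ already forces $X_{AB}\neq0$). The forward inclusion is immediate from the product form of $\sigma_{AB}$. For the reverse, given a nonzero $X_{AB}=\sum_{x}P_{A}^{x}\otimes Q_{B}^{x}\in\overrightarrow{\operatorname{SEP}}$, I set $\mu=\operatorname{Tr}\{X_{AB}\}>0$ and $\sigma_{AB}=X_{AB}/\mu$, and check that $\sigma_{AB}$ is a genuine separable state by normalizing each factor, i.e. writing $\sigma_{AB}=\sum_{x}\tfrac{\operatorname{Tr}\{P_{A}^{x}\}\operatorname{Tr}\{Q_{B}^{x}\}}{\mu}\,\tfrac{P_{A}^{x}}{\operatorname{Tr}\{P_{A}^{x}\}}\otimes\tfrac{Q_{B}^{x}}{\operatorname{Tr}\{Q_{B}^{x}\}}$ (dropping any zero terms), which exhibits it as a convex combination of product states.

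This normalization bijection between the cone and (scalar, state) pairs is the only step requiring care, and it is elementary; once it is in place the two minimizations have identical feasible sets and objective values, giving $2^{E_{\max}(A;B)_{\rho}}=W_{\operatorname{sep}}(A;B)_{\rho}$ and hence the lemma. I do not expect a genuine obstacle here: unlike Proposition~\ref{prop:amort-doesnt-help}, no SDP duality, Choi isomorphism, or teleportation identity is needed. The one point to watch is to confine $\mu$ to $(0,\infty)$, so that the identification $\mu=2^{\lambda}$ is legitimate and $\sigma_{AB}=X_{AB}/\mu$ is well defined; this is automatic because $\rho_{AB}\leq X_{AB}$ forces $\operatorname{Tr}\{X_{AB}\}\geq1$.
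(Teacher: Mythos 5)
Your proposal is correct and follows essentially the same route as the paper's own proof: unfold $D_{\max}$ into a joint minimization over $(\mu,\sigma_{AB})$ and absorb the scalar into the operator via $X_{AB}=\mu\,\sigma_{AB}$, identifying the feasible set with the separable cone $\overrightarrow{\operatorname{SEP}}$. The paper compresses this into two lines, leaving implicit the cone--(scalar, state) correspondence that you verify explicitly, so your write-up is just a more detailed version of the same argument.
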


\begin{proof}
Employing the definition in \eqref{eq:e-max}, consider that%
\begin{align}
\min_{\sigma_{AB}\in\operatorname{SEP}(A:B)}D_{\max}(\rho_{AB}\Vert\sigma
_{AB})  &  =\log_{2}\min_{\mu,\sigma_{AB}}\{\mu:\rho_{AB}\leq\mu\sigma
_{AB},\ \sigma_{AB}\in\operatorname{SEP}\}\\
&  =\log_{2}\min_{X_{AB}}\left\{  \operatorname{Tr}\{X_{AB}\}:\rho_{AB}\leq
X_{AB},\ X_{AB}\in\overrightarrow{\operatorname{SEP}}\right\}  .
\end{align}
This concludes the proof.
\end{proof}

\bigskip

We can then define a channel's max-relative entropy of entanglement $E_{\max
}(\mathcal{N})$ as in \eqref{eq:channel-entang},\ by replacing $E$ with
$E_{\max}$. We can alternatively write $E_{\max}(\mathcal{N})$ as follows, by
employing similar reasoning as given in the proof of \cite[Lemma~6]{CMW14}:%
\begin{equation}
E_{\max}(\mathcal{N})=\max_{\rho_{S}}\min_{\sigma_{SB}\in\operatorname{SEP}%
}D_{\max}(\rho_{S}^{1/2}J_{SB}^{\mathcal{N}}\rho_{S}^{1/2}\Vert\sigma_{SB}),
\label{eq:e_max-channel-alt}%
\end{equation}
where $\rho_{S}$ is a density operator and $J_{SB}^{\mathcal{N}}$ is the Choi
operator for the channel $\mathcal{N}$, as defined in \eqref{eq:Choi-op}. We
now prove the following alternative expression for $E_{\max}(\mathcal{N})$:

\begin{lemma}
\label{lem:alt-e-max-channel}Let $\mathcal{N}_{A\rightarrow B}$ be a quantum
channel. Then%
\begin{equation}
E_{\max}(\mathcal{N})=\log_{2}\Sigma(\mathcal{N}),
\end{equation}
where%
\begin{equation}
\Sigma(\mathcal{N})=\min_{Y_{SB}\in\overrightarrow{\operatorname{SEP}}%
}\left\{  \left\Vert \operatorname{Tr}_{B}\{Y_{SB}\}\right\Vert _{\infty
}:J_{SB}^{\mathcal{N}}\leq Y_{SB}\right\}  .
\end{equation}

\end{lemma}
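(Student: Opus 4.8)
The plan is to combine the alternative state-level formula of Lemma~\ref{lem:alt-emax} with the channel expression \eqref{eq:e_max-channel-alt}, and then convert the resulting max-min into the single minimization defining $\Sigma(\mathcal{N})$ through a change of variables and a minimax exchange. Writing $\tilde{\rho}_{SB}\equiv\rho_{S}^{1/2}J_{SB}^{\mathcal{N}}\rho_{S}^{1/2}$, which is a genuine state because $\operatorname{Tr}_{B}\{J_{SB}^{\mathcal{N}}\}=I_{S}$ by trace preservation of $\mathcal{N}$, the inner minimization over $\operatorname{SEP}$ in \eqref{eq:e_max-channel-alt} is exactly $E_{\max}(S;B)_{\tilde{\rho}}=\log_{2}W_{\operatorname{sep}}(S;B)_{\tilde{\rho}}$ by Lemma~\ref{lem:alt-emax}. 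Since $\log_{2}$ is monotone, this gives $E_{\max}(\mathcal{N})=\log_{2}\max_{\rho_{S}}W_{\operatorname{sep}}(S;B)_{\tilde{\rho}}$, so it suffices to prove $\max_{\rho_{S}}W_{\operatorname{sep}}(S;B)_{\tilde{\rho}}=\Sigma(\mathcal{N})$.

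For the inequality $\max_{\rho_{S}}W_{\operatorname{sep}}(S;B)_{\tilde{\rho}}\leq\Sigma(\mathcal{N})$, I would argue directly, in the spirit of the substitution used in Proposition~\ref{prop:amort-doesnt-help}. Let $Y_{SB}\in\overrightarrow{\operatorname{SEP}}$ be optimal for $\Sigma(\mathcal{N})$, so $J_{SB}^{\mathcal{N}}\leq Y_{SB}$ and $\operatorname{Tr}_{B}\{Y_{SB}\}\leq\Sigma(\mathcal{N})\,I_{S}$ (using $\operatorname{Tr}_{B}\{Y_{SB}\}\geq0$). For any state $\rho_{S}$, set $X_{SB}=\rho_{S}^{1/2}Y_{SB}\rho_{S}^{1/2}$; conjugation by $\rho_{S}^{1/2}$ on system $S$ keeps $X_{SB}$ in $\overrightarrow{\operatorname{SEP}}$ and yields $X_{SB}\geq\tilde{\rho}_{SB}$, while $\operatorname{Tr}\{X_{SB}\}=\operatorname{Tr}\{\rho_{S}\operatorname{Tr}_{B}\{Y_{SB}\}\}\leq\Sigma(\mathcal{N})$. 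Hence $X_{SB}$ is feasible for $W_{\operatorname{sep}}(S;B)_{\tilde{\rho}}$ and $W_{\operatorname{sep}}(S;B)_{\tilde{\rho}}\leq\Sigma(\mathcal{N})$ for every $\rho_{S}$. This direction is elementary and needs no minimax theorem.

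For the reverse inequality I would first observe that, for \emph{full-rank} $\rho_{S}$, the invertible change of variables $Y_{SB}=\rho_{S}^{-1/2}X_{SB}\rho_{S}^{-1/2}$ again preserves $\overrightarrow{\operatorname{SEP}}$ and turns the constraint $X_{SB}\geq\tilde{\rho}_{SB}$ into $Y_{SB}\geq J_{SB}^{\mathcal{N}}$, giving $W_{\operatorname{sep}}(S;B)_{\tilde{\rho}}=\min_{Y_{SB}\in\overrightarrow{\operatorname{SEP}},\,Y_{SB}\geq J_{SB}^{\mathcal{N}}}\operatorname{Tr}\{\rho_{S}\operatorname{Tr}_{B}\{Y_{SB}\}\}$. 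The function $g(\rho_{S},Y_{SB})=\operatorname{Tr}\{\rho_{S}\operatorname{Tr}_{B}\{Y_{SB}\}\}$ is bilinear, the density operators form a convex compact set, and the feasible $Y_{SB}$ form a convex set, so Sion's minimax theorem gives $\max_{\rho_{S}}\min_{Y_{SB}}g=\min_{Y_{SB}}\max_{\rho_{S}}g$; the inner maximization evaluates to $\max_{\rho_{S}}\operatorname{Tr}\{\rho_{S}\operatorname{Tr}_{B}\{Y_{SB}\}\}=\left\Vert\operatorname{Tr}_{B}\{Y_{SB}\}\right\Vert_{\infty}$ because $\operatorname{Tr}_{B}\{Y_{SB}\}\geq0$, recovering $\Sigma(\mathcal{N})$.

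The main obstacle is the careful handling of this exchange. Because the change of variables requires $\rho_{S}^{-1/2}$, the identity $W_{\operatorname{sep}}(S;B)_{\tilde{\rho}}=\min_{Y_{SB}}g(\rho_{S},Y_{SB})$ is only immediate for full-rank $\rho_{S}$, so I must argue that restricting the outer maximization to the (dense) full-rank states does not change the optimum. This follows since $\rho_{S}\mapsto\min_{Y_{SB}}g(\rho_{S},Y_{SB})$ is a pointwise infimum of linear functionals, hence concave and upper semicontinuous, and a short concavity argument along segments from an interior point to the boundary shows its supremum over full-rank states equals its maximum over all states. The remaining technical points — that the infimum over the (noncompact) feasible set of $Y_{SB}$ is attained, so that Sion's theorem yields a genuine minimum, and the semicontinuity hypotheses of Sion's theorem — are routine but are precisely where care is needed, in contrast to the elementary $\leq$ direction.
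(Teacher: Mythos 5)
Your proof is correct and follows essentially the same route as the paper's: both start from \eqref{eq:e_max-channel-alt} combined with Lemma~\ref{lem:alt-emax}, pass to the new variable $Y_{SB}=\rho_{S}^{-1/2}X_{SB}\rho_{S}^{-1/2}$ under an invertibility ansatz on $\rho_{S}$, and exchange the maximization over $\rho_{S}$ with the minimization over $Y_{SB}$ via Sion's minimax theorem, finally evaluating $\max_{\rho_{S}}\operatorname{Tr}\{\rho_{S}\operatorname{Tr}_{B}\{Y_{SB}\}\}=\Vert\operatorname{Tr}_{B}\{Y_{SB}\}\Vert_{\infty}$. The differences are organizational, and they work in your favor: by handling the inequality $\max_{\rho_{S}}W_{\operatorname{sep}}(S;B)_{\tilde{\rho}}\leq\Sigma(\mathcal{N})$ through the direct feasibility substitution $X_{SB}=\rho_{S}^{1/2}Y_{SB}\rho_{S}^{1/2}$ (in the spirit of Proposition~\ref{prop:amort-doesnt-help}), you confine both the minimax exchange and the full-rank ansatz to a single direction, valid for all $\rho_{S}$ on the easy side; and where the paper merely notes that ``the ansatz may be lifted by an appropriate limiting argument,'' you actually supply that argument (the map $\rho_{S}\mapsto\inf_{Y_{SB}}\operatorname{Tr}\{\rho_{S}\operatorname{Tr}_{B}\{Y_{SB}\}\}$ is a pointwise infimum of linear functionals, hence concave, and concavity along segments from an interior point shows the supremum over full-rank states equals the supremum over all states), which is exactly the right way to close that gap. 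One small remark: Sion's theorem does not require the infimum over the noncompact feasible set to be attained---it yields $\sup\inf=\inf\sup$ directly---so attainment matters only if one insists on writing $\Sigma(\mathcal{N})$ as a minimum rather than an infimum; that attainment follows from closedness of the separable cone and the constraint set together with the bound $\operatorname{Tr}\{Y_{SB}\}\leq|S|\,\Vert\operatorname{Tr}_{B}\{Y_{SB}\}\Vert_{\infty}$, which makes the sublevel sets of the objective compact.
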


\begin{proof}
Employing \eqref{eq:e_max-channel-alt} and Lemma~\ref{lem:alt-emax}, we find
that%
\begin{equation}
E_{\max}(\mathcal{N})= \log\max_{\rho_{S}}\min_{Y_{SB}\in\overrightarrow
{\operatorname{SEP}}}\left\{  \operatorname{Tr}\{Y_{SB}\}:\rho_{S}^{1/2}%
J_{SB}^{\mathcal{N}}\rho_{S}^{1/2}\leq Y_{SB}\right\}  .
\end{equation}
So our aim is to prove that the expression inside the logarithm is equal to
$\Sigma(\mathcal{N})$. Taking the ansatz that $\rho_{S}$ is an invertible
density operator, we find that the condition $\rho_{S}^{1/2}J_{SB}%
^{\mathcal{N}}\rho_{S}^{1/2}\leq Y_{SB}$ is equivalent to the condition
$J_{SB}^{\mathcal{N}}\leq\rho_{S}^{-1/2}Y_{SB}\rho_{S}^{-1/2}=Y_{SB}^{\prime
}\in\overrightarrow{\operatorname{SEP}}(S\!:\!B)$. Noting that $Y_{SB}%
=\rho_{S}^{1/2}Y_{SB}^{\prime}\rho_{S}^{1/2}$, this means that%
\begin{align}
\max_{\rho_{S}}\min_{Y_{SB}\in\overrightarrow{\operatorname{SEP}}}\left\{
\operatorname{Tr}\{Y_{SB}\}:\rho_{S}^{1/2}J_{SB}^{\mathcal{N}}\rho_{S}%
^{1/2}\leq Y_{SB}\right\}   &  =\max_{\rho_{S}}\min_{Y_{SB}^{\prime}%
\in\overrightarrow{\operatorname{SEP}}}\left\{  \operatorname{Tr}\{\rho
_{S}Y_{SB}^{\prime}\}:J_{SB}^{\mathcal{N}}\leq Y_{SB}^{\prime}\right\}
\nonumber\\
&  =\min_{Y_{SB}^{\prime}\in\overrightarrow{\operatorname{SEP}}}\max_{\rho
_{S}}\left\{  \operatorname{Tr}\{\rho_{S}Y_{SB}^{\prime}\}:J_{SB}%
^{\mathcal{N}}\leq Y_{SB}^{\prime}\right\} \nonumber\\
&  =\min_{Y_{SB}^{\prime}\in\overrightarrow{\operatorname{SEP}}}\max_{\rho
_{S}}\left\{  \operatorname{Tr}\{\rho_{S}\operatorname{Tr}_{B}\{Y_{SB}%
^{\prime}\}\}:J_{SB}^{\mathcal{N}}\leq Y_{SB}^{\prime}\right\} \nonumber\\
&  =\min_{Y_{SB}^{\prime}\in\overrightarrow{\operatorname{SEP}}}\left\{
\left\Vert \operatorname{Tr}_{B}\{Y_{SB}^{\prime}\}\right\Vert _{\infty
}\}:J_{SB}^{\mathcal{N}}\leq Y_{SB}^{\prime}\right\} \nonumber\\
&  =\Sigma(\mathcal{N}).
\end{align}
The second equality follows from the Sion minimax theorem:\ the sets over
which we are optimizing are convex, with the set of density operators
additionally being compact, and the objective function $\operatorname{Tr}%
\{\rho_{S}Y_{SB}^{\prime}\}$ is linear in $\rho_{S}$ and $Y_{SB}^{\prime}$,
and so the Sion minimax theorem applies. The third equality follows from
partial trace, and the fourth follows because $\left\Vert D\right\Vert
_{\infty}=\max_{\rho}\operatorname{Tr}\{D\rho\}$, when the optimization is
with respect to density operators. Finally, we note that the ansatz may be
lifted by an appropriate limiting argument.
\end{proof}

\bigskip

We can now see that the expressions for $E_{\max}(A;B)_{\rho}$ in
Lemma~\ref{lem:alt-emax}\ and $E_{\max}(\mathcal{N})$ in
Lemma~\ref{lem:alt-e-max-channel} have a very similar form to those in
\eqref{eq:alt-max-Rains-state} and \eqref{eq:alt-max-Rains-channel} for
$R_{\max}(A;B)_{\rho}$ and $R_{\max}(\mathcal{N})$, respectively. However, the
optimization problems for $E_{\max}(A;B)_{\rho}$ and $E_{\max}(\mathcal{N})$
are not necessarily efficiently computable because they involve an
optimization over the cone of separable operators, which is known to be
difficult \cite{Harrow:2013} in general. Regardless, due to the forms that we
now have for $E_{\max}(A;B)_{\rho}$ and $E_{\max}(\mathcal{N})$, we can prove
an inequality from \cite{Christandl2017}, analogous to
\eqref{eq:r-max-amort-no-help-1}, with a proof very similar to that given in
the proof of Proposition~\ref{prop:amort-doesnt-help}:

\begin{proposition}
[\cite{Christandl2017}]\label{prop:amort-doesnt-help-e-max}Let $\rho
_{A^{\prime}AB^{\prime}}$ be a state and let $\mathcal{N}_{A\rightarrow B}$ be
a quantum channel. Then%
\begin{equation}
E_{\max}(A^{\prime};BB^{\prime})_{\omega}\leq E_{\max}(\mathcal{N})+E_{\max
}(A^{\prime}A;B^{\prime})_{\rho},
\end{equation}
where%
\begin{equation}
\omega_{A^{\prime}BB^{\prime}}=\mathcal{N}_{A\rightarrow B}(\rho_{A^{\prime
}AB^{\prime}}).
\end{equation}

\end{proposition}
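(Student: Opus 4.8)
The plan is to mimic the proof of Proposition~\ref{prop:amort-doesnt-help} almost verbatim, replacing the SDP formulation of $W$ with the cone-based formulation $W_{\operatorname{sep}}$ from Lemma~\ref{lem:alt-emax} and replacing $\Gamma(\mathcal{N})$ with $\Sigma(\mathcal{N})$ from Lemma~\ref{lem:alt-e-max-channel}. After removing logarithms via Lemmas~\ref{lem:alt-emax} and~\ref{lem:alt-e-max-channel}, the desired inequality becomes equivalent to the multiplicative statement
\begin{equation}
W_{\operatorname{sep}}(A^{\prime};BB^{\prime})_{\omega}\leq\Sigma(\mathcal{N})\cdot W_{\operatorname{sep}}(A^{\prime}A;B^{\prime})_{\rho}. \label{eq:emax-amort-goal}
\end{equation}
First I would let $X_{A^{\prime}AB^{\prime}}\in\overrightarrow{\operatorname{SEP}}(A^{\prime}A\!:\!B^{\prime})$ be optimal for $W_{\operatorname{sep}}(A^{\prime}A;B^{\prime})_{\rho}$, so that $\rho_{A^{\prime}AB^{\prime}}\leq X_{A^{\prime}AB^{\prime}}$ and $\operatorname{Tr}\{X_{A^{\prime}AB^{\prime}}\}=W_{\operatorname{sep}}(A^{\prime}A;B^{\prime})_{\rho}$, and let $Y_{SB}\in\overrightarrow{\operatorname{SEP}}(S\!:\!B)$ be optimal for $\Sigma(\mathcal{N})$, so that $J_{SB}^{\mathcal{N}}\leq Y_{SB}$ and $\Sigma(\mathcal{N})=\|\operatorname{Tr}_{B}\{Y_{SB}\}\|_{\infty}$.

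The construction of the feasible point is simpler here than in the Rains case because there is no $C/D$ splitting: I would set
\begin{equation}
Z_{A^{\prime}BB^{\prime}}=\langle\Upsilon|_{SA}\,X_{A^{\prime}AB^{\prime}}\otimes Y_{SB}\,|\Upsilon\rangle_{SA},
\end{equation}
and claim this is feasible for $W_{\operatorname{sep}}(A^{\prime};BB^{\prime})_{\omega}$. Three things must be checked. Feasibility of the inequality constraint follows by the same postselected-teleportation computation as in Proposition~\ref{prop:amort-doesnt-help}: since $\rho\leq X$ and $J^{\mathcal{N}}\leq Y$ and the map $W\mapsto\langle\Upsilon|_{SA}\,W\,|\Upsilon\rangle_{SA}$ is positive, we get $\langle\Upsilon|_{SA}\,\rho_{A^{\prime}AB^{\prime}}\otimes J_{SB}^{\mathcal{N}}\,|\Upsilon\rangle_{SA}\leq Z_{A^{\prime}BB^{\prime}}$, and the left side equals $\omega_{A^{\prime}BB^{\prime}}=\mathcal{N}(\rho_{A^{\prime}AB^{\prime}})$ by the Choi identity~\eqref{eq:Choi-iso}. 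The trace bound is exactly the computation in \eqref{eq:final-r-max-reasoning-amort-no-help}, now with $X$ in place of $C+D$ and $Y$ in place of $V+Y$: using the transpose trick~\eqref{eq:tr-trick} and \eqref{eq:max-ent-trace} to rewrite the partial inner product as $\operatorname{Tr}\{X_{A^{\prime}AB^{\prime}}\,T_{A}(\operatorname{Tr}_{B}\{Y_{AB}\})\}$, then applying H\"older's inequality and transpose-invariance of the $\infty$-norm, yields $\operatorname{Tr}\{Z_{A^{\prime}BB^{\prime}}\}\leq\operatorname{Tr}\{X_{A^{\prime}AB^{\prime}}\}\cdot\|\operatorname{Tr}_{B}\{Y_{AB}\}\|_{\infty}=W_{\operatorname{sep}}(A^{\prime}A;B^{\prime})_{\rho}\cdot\Sigma(\mathcal{N})$.

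The step I expect to be the only genuinely new obstacle is verifying that $Z_{A^{\prime}BB^{\prime}}\in\overrightarrow{\operatorname{SEP}}(A^{\prime}\!:\!BB^{\prime})$, since the membership constraint is now a separability condition rather than the PPT$'$ condition of the Rains program (where positivity was automatic). To handle this I would write $X_{A^{\prime}AB^{\prime}}=\sum_{x}P_{A^{\prime}B^{\prime}}^{x}\otimes R_{A}^{x}$ and $Y_{SB}=\sum_{y}Q_{S}^{y}\otimes T_{B}^{y}$ in their separable decompositions across the cuts $A^{\prime}B^{\prime}\!:\!A$ and $S\!:\!B$ respectively; substituting into the partial inner product and using the transpose trick to move the contraction of $S$ and $A$ onto the $A$-factors, one finds that $Z$ decomposes as a sum of terms of the form $(\text{operator on }A^{\prime}B^{\prime})\otimes(\text{operator on }B)$, each a product of a positive operator on $A^{\prime}$ (the $A^{\prime}$-marginal of $P^{x}$, which stays positive) times positive operators on $B^{\prime}$ and $B$, hence separable across $A^{\prime}\!:\!BB^{\prime}$; the key point is that the $\langle\Upsilon|_{SA}\cdots|\Upsilon\rangle_{SA}$ contraction of $R_{A}^{x}\otimes Q_{S}^{y}$ produces a nonnegative scalar coefficient by~\eqref{eq:max-ent-trace} applied with the transpose trick, preserving positivity of each tensor factor. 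Once separability is confirmed, $Z$ is feasible and the minimization defining $W_{\operatorname{sep}}(A^{\prime};BB^{\prime})_{\omega}$ gives~\eqref{eq:emax-amort-goal}, completing the proof.
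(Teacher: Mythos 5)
Most of your proposal tracks the paper's proof exactly: the reduction to $W_{\operatorname{sep}}(A';BB')_{\omega}\leq\Sigma(\mathcal{N})\cdot W_{\operatorname{sep}}(A'A;B')_{\rho}$, the choice $Z_{A'BB'}=\langle\Upsilon|_{SA}X_{A'AB'}\otimes Y_{SB}|\Upsilon\rangle_{SA}$, the operator-inequality check (valid since all four operators involved are positive semi-definite), and the H\"older/transpose-invariance trace bound are all correct and are what the paper does. The gap is precisely in the step you yourself single out as the new obstacle: verifying $Z_{A'BB'}\in\overrightarrow{\operatorname{SEP}}(A'\!:\!BB')$. You invoke a separable decomposition $X_{A'AB'}=\sum_{x}P_{A'B'}^{x}\otimes R_{A}^{x}$ ``across the cut $A'B'\!:\!A$,'' but no such decomposition is available. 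Feasibility of $X$ for $W_{\operatorname{sep}}(A'A;B')_{\rho}$ means $X\in\overrightarrow{\operatorname{SEP}}(A'A\!:\!B')$, i.e., $X=\sum_{x}P_{A'A}^{x}\otimes Q_{B'}^{x}$, and separability across $A'A\!:\!B'$ does not imply separability across $A'B'\!:\!A$: the operator $\Phi_{A'A}\otimes\sigma_{B'}$, with $\Phi_{A'A}$ maximally entangled, is a product operator across $A'A\!:\!B'$ and hence lies in the cone, yet it is entangled across $A'B'\!:\!A$. Your factorization assumes away exactly the difficulty: if the $A$-part of $X$ really did split off as $R_{A}^{x}$, the contraction $\langle\Upsilon|_{SA}\cdots|\Upsilon\rangle_{SA}$ would produce nonnegative scalars and there would be little left to prove, whereas in reality $A$ is correlated with $A'$ inside each $P_{A'A}^{x}$. (Moreover, even granting your decomposition, a term $P_{A'B'}^{x}\otimes T_{B}^{y}$ with $P_{A'B'}^{x}$ merely positive need not be separable across $A'\!:\!B'B$, so the conclusion would still not follow.)

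The fix---and this is how the paper argues---is to use the decomposition you actually have, $X=\sum_{x}P_{A'A}^{x}\otimes Q_{B'}^{x}$ together with $Y_{SB}=\sum_{y}L_{S}^{y}\otimes M_{B}^{y}$, and let the contraction over $SA$ act jointly on $P_{A'A}^{x}$ and $L_{S}^{y}$. By \eqref{eq:tr-trick} and \eqref{eq:max-ent-trace},
\begin{equation}
Z_{A'BB'}=\sum_{x,y}\operatorname{Tr}_{A}\{P_{A'A}^{x}\,T_{A}(L_{A}^{y})\}\otimes Q_{B'}^{x}\otimes M_{B}^{y},
\end{equation}
so the object produced on $A'$ is an operator, not a scalar. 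The crux is its positivity, which follows from the identity $\operatorname{Tr}_{A}\{P_{A'A}^{x}\,T_{A}(L_{A}^{y})\}=\operatorname{Tr}_{A}\{\sqrt{T_{A}(L_{A}^{y})}\,P_{A'A}^{x}\,\sqrt{T_{A}(L_{A}^{y})}\}\geq0$. Each summand is then a tensor product of positive semi-definite operators on $A'$, $B'$, and $B$, giving $Z_{A'BB'}\in\overrightarrow{\operatorname{SEP}}(A'\!:\!BB')$ as needed. With your separability paragraph replaced by this argument, the rest of your proof is correct and coincides with the paper's.
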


\begin{proof}
By removing logarithms and applying Lemmas~\ref{lem:alt-emax} and
\ref{lem:alt-e-max-channel}, the desired inequality is equivalent to the
following one:%
\begin{equation}
W_{\operatorname{sep}}(A^{\prime};BB^{\prime})_{\omega}\leq\Sigma
(\mathcal{N})\cdot W_{\operatorname{sep}}(A^{\prime}A;B^{\prime})_{\rho},
\label{eq:no-logs-amort-no-help-e-max}%
\end{equation}
and so we aim to prove this one. Exploiting the identity in
Lemma~\ref{lem:alt-emax}, we find that%
\begin{equation}
W_{\operatorname{sep}}(A^{\prime}A;B^{\prime})_{\rho}=\min\operatorname{Tr}%
\{C_{A^{\prime}AB^{\prime}}\},
\end{equation}
subject to the constraints%
\begin{align}
C_{A^{\prime}AB^{\prime}}  &  \in\overrightarrow{\operatorname{SEP}}%
(A^{\prime}A\!:\!B^{\prime}),\\
C_{A^{\prime}AB^{\prime}}  &  \geq\rho_{A^{\prime}AB^{\prime}},
\label{eq:C-e-max-constr-2}%
\end{align}
while the identity in Lemma~\ref{lem:alt-e-max-channel} gives that%
\begin{equation}
\Sigma(\mathcal{N})=\min\left\Vert \operatorname{Tr}_{B}\{Y_{SB}\}\right\Vert
_{\infty},
\end{equation}
subject to the constraints%
\begin{align}
Y_{SB}  &  \in\overrightarrow{\operatorname{SEP}}(S\!:\!B),\\
Y_{SB}  &  \geq J_{SB}^{\mathcal{N}}. \label{eq:Y-e-max-chan-constr-2}%
\end{align}
The identity in Lemma~\ref{lem:alt-emax} implies that the left-hand side of
\eqref{eq:no-logs-amort-no-help-e-max}\ is equal to%
\begin{equation}
W_{\operatorname{sep}}(A^{\prime};BB^{\prime})_{\omega}=\min\operatorname{Tr}%
\{E_{A^{\prime}BB^{\prime}}\},
\end{equation}
subject to the constraints%
\begin{align}
E_{A^{\prime}BB^{\prime}}  &  \in\overrightarrow{\operatorname{SEP}}%
(A^{\prime}\!:\!BB^{\prime}),\label{eq:sep-constr-1}\\
E_{A^{\prime}BB^{\prime}}  &  \geq\mathcal{N}_{A\rightarrow B}(\rho
_{A^{\prime}AB^{\prime}}). \label{eq:sep-constr-2}%
\end{align}

With these optimizations in place, we can now establish the inequality in
\eqref{eq:no-logs-amort-no-help-e-max} by making a judicious choice for
$E_{A^{\prime}BB^{\prime}}$. Let $C_{A^{\prime}AB^{\prime}}$ be optimal for
$W_{\operatorname{sep}}(A^{\prime}A;B^{\prime})_{\rho}$, and let $Y_{SB}$ be
optimal for $\Sigma(\mathcal{N})$. Let $|\Upsilon\rangle_{SA}$ be the
maximally entangled vector, as defined in \eqref{eq:max-ent-vec}. Pick%
\[
E_{A^{\prime}BB^{\prime}}=\langle\Upsilon|_{SA}C_{A^{\prime}AB^{\prime}%
}\otimes Y_{SB}|\Upsilon\rangle_{SA}.
\]
This choice is clearly similar to that in the proof of
Proposition~\ref{prop:amort-doesnt-help}. We need to prove that $E_{A^{\prime
}BB^{\prime}}$ is feasible for $W_{\operatorname{sep}}(A^{\prime};BB^{\prime
})_{\omega}$. To this end, consider that%
\begin{align}
\langle\Upsilon|_{SA}C_{A^{\prime}AB^{\prime}}\otimes Y_{SB}|\Upsilon
\rangle_{SA}  &  \geq\langle\Upsilon|_{SA}\rho_{A^{\prime}AB^{\prime}}\otimes
J_{SB}^{\mathcal{N}}|\Upsilon\rangle_{SA}\nonumber\\
&  =\mathcal{N}_{A\rightarrow B}(\rho_{A^{\prime}AB^{\prime}}),
\end{align}
which follows from \eqref{eq:C-e-max-constr-2},
\eqref{eq:Y-e-max-chan-constr-2}, and \eqref{eq:Choi-iso}. Now, since
$C_{A^{\prime}AB^{\prime}}\in\overrightarrow{\operatorname{SEP}}(A^{\prime
}A\!:\!B^{\prime})$, it can be written as $\sum_{x}P_{A^{\prime}A}^{x}\otimes
Q_{B^{\prime}}^{x}$ for positive semi-definite $P_{A^{\prime}A}^{x}$ and
$Q_{B^{\prime}}^{x}$. Furthermore, consider that since $Y_{SB}\in
\overrightarrow{\operatorname{SEP}}(S\!:\!B)$, it can be written as $\sum
_{y}L_{S}^{y}\otimes M_{B}^{y}$ for positive semi-definite $L_{S}^{y}$ and
$M_{B}^{y}$. Then we have that%
\begin{align}
\langle\Upsilon|_{SA}C_{A^{\prime}AB^{\prime}}\otimes Y_{SB}|\Upsilon
\rangle_{SA}  &  =\sum_{x,y}\langle\Upsilon|_{SA}P_{A^{\prime}A}^{x}\otimes
Q_{B^{\prime}}^{x}\otimes L_{S}^{y}\otimes M_{B}^{y}|\Upsilon\rangle
_{SA}\nonumber\\
&  =\sum_{x,y}\langle\Upsilon|_{SA}P_{A^{\prime}A}^{x}T_{A}(L_{A}^{y})\otimes
Q_{B^{\prime}}^{x}\otimes I_{S}\otimes M_{B}^{y}|\Upsilon\rangle
_{SA}\nonumber\\
&  =\sum_{x,y}\operatorname{Tr}_{A}\{P_{A^{\prime}A}^{x}T_{A}(L_{A}%
^{y})\}\otimes Q_{B^{\prime}}^{x}\otimes M_{B}^{y}\nonumber\\
&  \in\operatorname{SEP}(A^{\prime}\!:\!BB^{\prime}).
\end{align}
The second equality follows from \eqref{eq:tr-trick} and the third from
\eqref{eq:max-ent-trace}. The last statement follows because
$\operatorname{Tr}_{A}\left\{  P_{A^{\prime}A}^{x}T_{A}(L_{A}^{y})\right\}
=\operatorname{Tr}_{A}\left\{  \sqrt{T_{A}(L_{A}^{y})}P_{A^{\prime}A}^{x}%
\sqrt{T_{A}(L_{A}^{y})}\right\}  $ is positive semi-definite for each $x$ and
$y$.\ Finally, consider that%
\begin{align}
\operatorname{Tr}\{E_{A^{\prime}BB^{\prime}}\}  &  =\operatorname{Tr}%
\{\langle\Upsilon|_{SA}C_{A^{\prime}AB^{\prime}}\otimes Y_{SB}|\Upsilon
\rangle_{SA}\}\nonumber\\
&  =\operatorname{Tr}\{C_{A^{\prime}AB^{\prime}}T_{A}(Y_{AB})\}\nonumber\\
&  =\operatorname{Tr}\{C_{A^{\prime}AB^{\prime}}T_{A}(\operatorname{Tr}%
_{B}\left\{  Y_{AB}\right\}  )\}\nonumber\\
&  \leq\operatorname{Tr}\{C_{A^{\prime}AB^{\prime}}\}\left\Vert T_{A}%
(\operatorname{Tr}_{B}\left\{  Y_{AB}\right\}  )\right\Vert _{\infty
}\nonumber\\
&  =\operatorname{Tr}\{C_{A^{\prime}AB^{\prime}}\}\left\Vert \operatorname{Tr}%
_{B}\left\{  Y_{AB}\right\}  \right\Vert _{\infty}\nonumber\\
&  =W_{\operatorname{sep}}(A^{\prime}A;B^{\prime})_{\rho}\cdot\Sigma
(\mathcal{N}).
\end{align}
The reasoning for this chain is identical to that for \eqref{eq:final-r-max-reasoning-amort-no-help}.

Thus, we can conclude that our choice of $E_{A^{\prime}BB^{\prime}}$ is
feasible for $W(A^{\prime};BB^{\prime})_{\omega}$. Since $W(A^{\prime
};BB^{\prime})_{\omega}$ involves a minimization over all $E_{A^{\prime
}BB^{\prime}}$ satisfying \eqref{eq:sep-constr-1} and \eqref{eq:sep-constr-2},
this concludes our proof of \eqref{eq:no-logs-amort-no-help-e-max}.
\end{proof}

\bigskip

By the same reasoning employed in the proof of
Corollary~\ref{cor:amort-not-enhance}, the equality in
\eqref{eq:max-rel-amort-no-help}\ follows as a consequence of the inequality
in Proposition~\ref{prop:amort-doesnt-help-e-max}.

We finally note that max-relative entropy of entanglement is subadditive as a
function of quantum channels, in the following sense:
\begin{equation}
E_{\max}(\mathcal{N} \otimes\mathcal{M}) \leq E_{\max}(\mathcal{N} ) +E_{\max
}( \mathcal{M})
\end{equation}
where $\mathcal{N}$ and $\mathcal{M}$ are quantum channels. This follows as a
consequence of the equality in \eqref{eq:max-rel-amort-no-help} and
\cite[Proposition~4]{KW17}, the latter of which states that the amortized
entanglement is always subadditive as a function of quantum channels. It is an
interesting open question to determine whether the max-relative entropy of
entanglement is additive as a function of quantum channels.

\section{On converses for quantum and private capacities}

\label{sec:discussion}Here we discuss briefly how our strong converse result
stands with respect to prior work on strong converses and quantum and private
capacities \cite{HW01,TWW14,MRW16,WD16,P17,WTB16,Christandl2017,WFD17}.

\subsection{Quantum capacities}

Let $Q(\mathcal{N})$ and $Q^{\dag}(\mathcal{N})$ denote the quantum capacity
and the strong converse quantum capacity of a quantum channel $\mathcal{N}$.
These quantities are defined similarly to $Q^{\operatorname{PPT-P}%
,\leftrightarrow}(\mathcal{N})$ and $Q^{\operatorname{PPT-P},\leftrightarrow
\dag}(\mathcal{N})$, but there is no PPT assistance allowed. The partial
transposition bound was defined in \cite{HW01} as follows:%
\begin{equation}
Q_{\Theta}(\mathcal{N})=\log_{2}\left\Vert T\circ\mathcal{N}\right\Vert
_{\Diamond},
\end{equation}
where $T$ denotes the transpose map and $\left\Vert \cdot\right\Vert
_{\Diamond}$ is the diamond norm. In \cite{HW01}, $Q_{\Theta}(\mathcal{N})$
was established as a pretty-strong converse rate, in the sense of \cite{MW13},
for the quantum capacity of the channel $\mathcal{N}$. This result was
subsequently improved in \cite{MRW16} to the following strong converse bound:
\begin{equation}
Q^{\dag}(\mathcal{N})\leq Q^{\operatorname{PPT-P},\leftrightarrow\dag
}(\mathcal{N})\leq Q_{\Theta}(\mathcal{N}).
\end{equation}
The recent work in \cite{WD16,WFD17}\ established the following two bounds:%
\begin{align}
R_{\max}(\mathcal{N})  &  \leq Q_{\Theta}(\mathcal{N}),\\
Q^{\dag}(\mathcal{N})  &  \leq R_{\max}(\mathcal{N}).
\end{align}
Thus, in light of the above history, it is clear that the natural question was
whether $Q^{\operatorname{PPT-P},\leftrightarrow\dag}(\mathcal{N})\leq
R_{\max}(\mathcal{N})$, and this is the question that our paper affirmatively
answers. In summary, we now have that%
\begin{equation}
Q(\mathcal{N})\leq Q^{\dag}(\mathcal{N})\leq Q^{\operatorname{PPT-P}%
,\leftrightarrow\dag}(\mathcal{N})\leq R_{\max}(\mathcal{N})\leq Q_{\Theta
}(\mathcal{N}).
\end{equation}

We now mention some other related results. The Rains relative entropy
$R(A;B)_{\rho}$\ of a bipartite state $\rho_{AB}$\ is defined as
\cite{R99,R01,AdMVW02}%
\begin{equation}
R(A;B)_{\rho}=\min_{\sigma_{AB}\in\text{PPT}^{\prime}(A:B)}D(\rho_{AB}%
\Vert\sigma_{AB}),
\end{equation}
where $D$ denotes the quantum relative entropy \cite{U62,Lindblad1973},
defined as $D(\omega\Vert\tau)=\operatorname{Tr}\{\omega\lbrack\log_{2}%
\omega-\log_{2}\tau]\}$ whenever $\operatorname{supp}(\omega)\subseteq
\operatorname{supp}(\tau)$ and $+\infty$ otherwise. Then the Rains information
$R(\mathcal{N})$ of a quantum channel $\mathcal{N}$ is defined by replacing
$E$ in \eqref{eq:channel-entang}\ with $R(A;B)_{\rho}$ \cite{TWW14}. One can
also define the amortized Rains information $R_{A}(\mathcal{N})$ via the
recipe in \eqref{eq:channel-entang-amort}. Due to the inequality
$D(\omega\Vert\tau)\leq D_{\max}(\omega\Vert\tau)$ \cite{D09}, the following
inequality holds%
\begin{equation}
R(\mathcal{N})\leq R_{\max}(\mathcal{N}).
\end{equation}
The following bound is known from \cite{TWW14}%
\begin{equation}
Q^{\dag}(\mathcal{N})\leq R(\mathcal{N}),
\end{equation}
and it is open to determine whether%
\begin{equation}
Q^{\operatorname{PPT-P},\leftrightarrow\dag}(\mathcal{N})\overset{?}{\leq
}R(\mathcal{N}).
\end{equation}
This latter inequality is known to hold if the channel $\mathcal{N}$ has
sufficient symmetry \cite{TWW14}.

The squashed entanglement $E_{\text{sq}}(A;B)_{\rho}$\ of a quantum state
$\rho_{AB}$ is defined as \cite{CW04}%
\begin{equation}
E_{\text{sq}}(A;B)_{\rho}=\frac{1}{2}\inf_{\rho_{ABE}}\left\{  I(A;B|E)_{\rho
}:\operatorname{Tr}_{E}\{\rho_{ABE}\}=\rho_{AB}\right\}  ,
\end{equation}
where $I(A;B|E)_{\rho}=H(AE)_{\rho}+H(BE)_{\rho}-H(E)_{\rho}-H(ABE)_{\rho}$
and $H(F)_{\sigma}=-\operatorname{Tr}\{\sigma_{F}\log_{2}\sigma_{F}\}$. (See
also discussions in \cite{T99,T02}\ for squashed entanglement.) One can also
consider the squashed entanglement of a channel $E_{\text{sq}}(\mathcal{N})$
\cite{TGW14IEEE}, as well as the amortized squashed entanglement
$E_{\text{sq,}A}(\mathcal{N})$. Another function of a quantum channel is its
entanglement cost \cite{BBCW13}, which we write as $E_{C}(\mathcal{N})$ and
for which a definition is given in \cite{BBCW13}. The following bounds and
relations are known regarding these quantities:%
\begin{align}
Q^{\leftrightarrow\dag}(\mathcal{N})  &  \leq E_{C}(\mathcal{N}), &
\cite{BBCW13}\\
E_{\text{sq},A}(\mathcal{N})  &  = E_{\text{sq}}(\mathcal{N}) , &
\cite{TGW14IEEE}\\
Q^{\leftrightarrow}(\mathcal{N})  &  \leq E_{\text{sq}}(\mathcal{N}) \leq
E_{C}(\mathcal{N}) , & \cite{TGW14IEEE}%
\end{align}
It is open to determine whether the following inequality holds%
\begin{equation}
Q^{\leftrightarrow\dag}(\mathcal{N})\overset{?}{\leq}E_{\text{sq}}%
(\mathcal{N}).
\end{equation}

\subsection{Private capacities}

One can also consider various private capacities and strong converse private
capacities of a quantum channel, denoted as $P(\mathcal{N})$,
$P^{\leftrightarrow}(\mathcal{N})$, $P^{\dag}(\mathcal{N})$, and
$P^{\leftrightarrow\dag}(\mathcal{N})$. Defining the relative entropy of
entanglement $E_{R}$ \cite{VP98}\ as%
\begin{equation}
E_{R}(A;B)_{\rho}=\min_{\sigma_{AB}\in\operatorname{SEP}(A:B)}D(\rho_{AB}%
\Vert\sigma_{AB}),
\end{equation}
and the max-relative entropy of entanglement $E_{\max}$\ as we did in
\eqref{eq:e-max}, we can also define their channel versions $E_{R}%
(\mathcal{N})$ and $E_{\max}(\mathcal{N})$\ and their amortized versions
$E_{R,A}(\mathcal{N})$ and $E_{\max,A}(\mathcal{N})$. For these various
quantities, we have that
\begin{align}
E_{R}(\mathcal{N})  &  \leq E_{\max}(\mathcal{N}), &  & \\
P^{\leftrightarrow}(\mathcal{N})  &  \leq E_{\text{sq}}(\mathcal{N}), &  &
\cite{TGW14IEEE,Wilde2016}\\
P^{\leftrightarrow\dag}(\mathcal{N})  &  \leq E_{C}(\mathcal{N}), &  &
\cite{Christandl2017}\\
E_{\max,A}(\mathcal{N})  &  =E_{\max}(\mathcal{N}), &  &
\cite{Christandl2017}\\
P^{\leftrightarrow\dag}(\mathcal{N})  &  \leq E_{\max}(\mathcal{N}), &  &
\cite{Christandl2017}\\
P^{\dag}(\mathcal{N})  &  \leq E_{R}(\mathcal{N}). &  &  \cite{WTB16}%
\end{align}
It is not known whether%
\begin{align}
P^{\leftrightarrow\dag}(\mathcal{N})  &  \overset{?}{\leq}E_{\text{sq}%
}(\mathcal{N}),\\
P^{\leftrightarrow\dag}(\mathcal{N})  &  \overset{?}{\leq}E_{R}(\mathcal{N}),
\end{align}
but the latter inequality is known to hold for channels with sufficient
symmetry \cite{WTB16}.

An interesting question is whether the max-Rains information of a channel
$\mathcal{N}$ could serve as an upper bound on one of its private capacities
$P(\mathcal{N})$, $P^{\dag}(\mathcal{N})$, $P^{\leftrightarrow}(\mathcal{N})$,
or $P^{\leftrightarrow\dag}(\mathcal{N})$. The guiding principle behind many
strong converse bounds in quantum information theory is to compare the output
of the actual protocol, with respect to a relative-entropy-like measure, to a
state or positive semi-definite operator that is ``useless'' for the task. By
\textquotedblleft useless,\textquotedblright\ we mean that the state or
operator should have a probability of passing a test for the task that is no
larger than inversely proportional to the dimension of the system being
communicated. For example, this kind of result is known from \cite[Lemma~2]%
{R99}\ for operators in the set $\operatorname{PPT}^{\prime}(M_{A}\!:\!M_{B})$
and the entanglement test, and we used this bound effectively in
\eqref{eq:ent-test-PPT-states}\ in order to establish the max-Rains
information as an upper bound on PPT-P-assisted quantum capacity. Furthermore,
this kind of result is known from \cite{HHHO05,HHHO09,WTB16}\ for separable
states and the privacy test, and prior work has used this result to establish
upper bounds on various private capacities of a channel
\cite{WTB16,Christandl2017}. However, it is not known how to obtain this kind
of result for operators in the set $\operatorname{PPT}^{\prime}(M_{A}%
\!:\!M_{B})$ and the privacy test, and it is for this reason that we have not
been able to establish the max-Rains information as an upper bound on private
capacity. We doubt whether this would be possible, given that there exist
channels that produce PPT\ states with non-zero distillable secret key
\cite{HHHLO08,PhysRevLett.100.110502}.

In the same spirit, one might wonder about differences between the max-Rains
relative entropy and the max-relative entropy of entanglement. First, it is
clear that the max-relative entropy of entanglement can increase under the
action of a PPT-P channel, because there exist states that are PPT and
entangled \cite{HHHO05}. Furthermore, the aforementioned is related to the
fact that there exist states for which there is a strict separation between
the max-Rains relative entropy and the max-relative entropy of entanglement.
Any state that is PPT and entangled has a max-Rains relative entropy equal to
zero, while its max-relative entropy of entanglement is non-zero.

\subsection{Summary: Channel measures that do not increase under amortization}

In summary, we know that amortization does not increase

\begin{enumerate}
\item the squashed entanglement $E_{\text{sq}}(\mathcal{N})$ \cite{TGW14IEEE},

\item the max-relative entropy of entanglement $E_{\max}(\mathcal{N})$
\cite{Christandl2017},

\item or the max-Rains information $R_{\max}(\mathcal{N})$
(Corollary~\ref{cor:amort-not-enhance}).
\end{enumerate}

\noindent This is the main reason that these information quantities are
single-letter converse bounds for assisted capacities. Is there any chance
that the same could hold generally for $E_{R}(\mathcal{N})$ or $R(\mathcal{N}%
)$? If so, then the known capacity bounds could be improved.

\section{Conclusion}

\label{sec:conclusion}

The main contribution of our paper was to show that the max-Rains information
of a quantum channel does not increase under amortization. That is, when
entanglement is quantified by the max-Rains relative entropy, the net
entanglement that a channel can generate is the same as the amount of
entanglement that it can generate if the sender and receiver do not start with
any initial entanglement. This result then implies a single-letter,
strong-converse, and efficiently computable bound for the capacity of a
quantum channel to communicate qubits along with the assistance of
PPT-preserving operations between every channel use. As such, the max-Rains
information can be easily evaluated and is a general benchmark for this
capacity. As we emphasized previously, our upper bound is also an upper bound
on the physically relevant LOCC-assisted quantum capacity. The main tool that
we used to prove our result is the formulation of the max-Rains relative
entropy and max-Rains information as semi-definite programs
\cite{WD16,WD16pra,WFD17} (in particular, we employed semi-definite
programming duality---we note here that this kind of approach has previously
been employed successfully for multiplicativity, additivity, or parallel
repetition problems in quantum information theory
\cite{BT16,Berta2017,TCS-068}). We also compared our result to other results
in the growing literature on the topic of bounds for the assisted capacities
of arbitrary quantum channels
\cite{TGW14IEEE,TWW14,MRW16,WTB16,Christandl2017}.

We also provided an alternative proof for the fact that amortization does not
enhance a channel's max-relative entropy of entanglement \cite{Christandl2017}%
: i.e., $E_{\max,A}(\mathcal{N})=E_{\max}(\mathcal{N})$. This statement was
proved in \cite{Christandl2017}\ by employing the methods of complex
interpolation \cite{BL76}, but here we found a different proof by establishing
alternative expressions for the max-relative entropy of entanglement
(Lemma~\ref{lem:alt-emax}) and a channel's max-relative entropy of
entanglement (Lemma~\ref{lem:alt-e-max-channel}). These alternative
expressions then allowed us to employ reasoning similar to that in our proof
of Proposition~\ref{prop:amort-doesnt-help} in order to establish a different
proof for the equality $E_{\max,A}(\mathcal{N})=E_{\max}(\mathcal{N})$. We
suspect that our approach could be useful in future applications.

Finally, in \cite{WFD17}, it was noted that the max-Rains information does not
give a good upper bound on the quantum capacity of the qubit depolarizing
channel. Our result gives a compelling reason for this observation:\ the
max-Rains information finds its natural place as an upper bound on the
PPT-P-assisted quantum capacity of the qubit depolarizing channel, and these
assisting operations allowed between every channel use could result in a
significant increase in capacity.

\textit{Acknowledgements}.
We are grateful to Omar Fawzi, Xin Wang, David Reeb, Siddhartha Das, and
Andreas Winter for discussions related to the topic of this paper. We also thank the anonymous referee for comments that helped to improve our paper, in particular for comments about the usefulness of PPT-P channels. Part of
this work was done during the workshop \textquotedblleft Beyond I.I.D. in
Information Theory,\textquotedblright\ hosted by the Institute for
Mathematical Sciences, Singapore, 24-28 July 2017. MB acknowledges funding by
the SNSF through a fellowship. MMW acknowledges support from the Office of
Naval Research and the National Science Foundation under grant no.~1350397.

\bibliographystyle{alpha}
\bibliography{Ref}

\end{document}